\documentclass[a4paper, 12pt]{article}
\usepackage[utf8]{inputenc}
\usepackage[initials, alphabetic]{amsrefs}
\usepackage{amsmath}
\usepackage{amsthm}
\usepackage[psamsfonts]{amssymb}
\usepackage{amsfonts}
\usepackage{color}
\usepackage{mathrsfs}
\usepackage{amssymb}
\usepackage{graphicx}
\usepackage{fancybox}
\usepackage{enumerate}
\usepackage{verbatim}
\usepackage{subfigure}
\usepackage{bm}
\usepackage{braket}
\usepackage{mdframed}
\usepackage{comment}

\providecommand{\keywords}[1]
{
  \small	
  \textbf{\textrm{Keywords}:} #1
}

\providecommand{\MSC}[1]
{
  \small	
  \textbf{\textrm{2020 Mathematics Subject Classification}:} #1
}

\theoremstyle{definition}
\newtheorem{dfn}{Definition}[section]
\newtheorem{rem}[dfn]{Remark}

\newtheorem{ass}[dfn]{Assumption}

\theoremstyle{plain}

\newtheorem{lem}[dfn]{Lemma}
\newtheorem{thm}[dfn]{Theorem}

\theoremstyle{remark}

\theoremstyle{remark}

\begin{document}
\title{The Thermodynamic Approach to Whole-Life Insurance:
A Method for Evaluation of Surrender Risk}
\author{Jir\^o Akahori, Yuuki Ida, Maho Nishida, and Shuji Tamada \\
Department of Mathematical Sciences, Ritsumeikan University}
\date{December\, 2020}
\maketitle{}
\abstract{We introduce a collective model 
for life insurance 
where 
the heterogeneity of 
each insured, including 
the health state, 
is modeled by a 
diffusion process. This model is influenced by 
concepts in statistical mechanics.
Using the proposed framework, 
one can describe the total pay-off as 
a functional of the diffusion process, 
which can be used to derive a 
level premium that evaluates 
the risk of lapses 
due to
the so-called adverse selection.
Two numerically tractable models 
are presented to 
exemplify the flexibility of the proposed framework.}

%\tableofcontents %消去する．
\ 

\keywords{Life insurance, Surrender Risk, Collective model, Feynmann-Kac Formula}

\MSC{91G05, 60J70}

\section{Introduction}

The risk of lapses, 
also referred to as surrender risk, 
is the instability
associated with unexpected lapses 
of insurance contracts, 
which may result in 
a huge loss for the insurer.  
As insurance contracts are collective in nature, 
we can divide the cause
of lapses into two categories: 
homogeneous and heterogeneous (among the policy holders). 
The former is basically due to macro-economic shocks
--- typically changes in interest rates, recessions, inflation, and so forth. 
The surrender risk arising from 
interest rate fluctuations, for example, can be evaluated using option-pricing type 
technologies (see e.g. \cite{Alb-Geman}). 

The latter, i.e. heterogeneous causes, can be further divided into two groups:  
economic and non-economic. 
A policyholder may surrender
because she is unexpectedly short of 
money while the economy is, as a whole, healthy. 
Such a cause is classified as heterogeneous-economic (see e.g. \cite{MLM2011}). 

Among non-economic causes, 
demographic heterogeneity, 
which is variation in the force of mortality, has been a central issue 
both academically and practically, 
as it has been recognized as a (potential) cause of the so-called moral hazard or adverse selection.
The risk from heterogeneity 
has been recognized in insurance
(since the 19-th century!), 
as is pointed out in the seminal paper by G.A. Akerlof
\cite{Akerlof}, 
which made clear the role of the ``asymmetry of information" in adverse selection.
M. Rothschild and Nobel laureate J. Stiglitz proved 
the non-existence of an 
economic equilibrium 
under asymmetry 
in \cite{Rot-Stig}. According to that paper, in a ``rational world", 
adverse selection leads 
to the non-existence of an equilibrium,
implying potential instability of a life insurance contract. 

Since then there have been many studies on life insurance, 
however most empirical studies 
do not seem to observe the predicted negative effects of adverse selection 
(see e.g. \cite{CP}). 
Some recent studies, like \cite{Meza-Webb},
suggest the effect of so-called
advantageous selection:
healthier people might be more risk averse. 
Thus many theoretical studies 
%on the other hand 
have been rather interested in 
modelling 
the effect of heterogeneity on the surrender risk without 
the hypothesis of rationality. 
One of the main streams can be found 
in direct modelling of 
the dependence of demographic 
heterogeneity and surrender inclination. 
In most cases, however, 
this is modeled by 
a simple stochastic model,
a discrete-time, discrete-state
Markov chain at best; see e.g. 
\cite{Bl}, \cite{Jones1998}, 
\cite{term}, 
and more recently \cite{Adams},
to name a few.  
By contrast, 
the present paper 
proposes a model using diffusion processes. 

We will start in section \ref{secsibou1}
with introducing the basic framework of our model, without mortality or surrender risk, 
where the heterogeneity of each insured person is 
modeled by a multi-dimensional diffusion process aiming to describe the 
various causes discussed above. 
The introduced diffusion processes 
naturally define a probability measure, which describes the distribution 
of the ``heterogeneity", 
and we assume that 
it is approximated by
its infinite-agent limit, 
which is analogous to a 
\textbf{thermodynamic limit}
(Assumption \ref{a1} and Theorem \ref{T1}). 
The ``thermodynamic" procedure is the core of our framework.  
Then in section \ref{mort}, 
we introduce the ``lifetime" of an insured, modeled by the 
killing time of an associated diffusion process (Assumption \ref{a3}). 
The thermodynamic probability measure 
is then compensated by 
the killing rate (Theorem \ref{T27}). 
The basic framework with the lifetime
is used in section \ref{DTInsurance}
to model the cash-flow of a whole-life insurance with a level premium. 
Its continuous-time version 
is presented in section \ref{cont}, 
where, after taking the ``thermodynamic limit", the premium can be obtained 
by calculating the Laplace transform 
of expectations with respect to
the heterogeneity distribution
(see equation \eqref{pxstar2} in Theorem \ref{Th3.5}). 
The Laplace transform method 
is another core of our framework
and enables computational tractability. 

Then, in section \ref{seckaiyaku}, 
we add to the introduced model the surrender time, 
which is again modelled by a killing time
(Assumption \ref{a4}). 
Under the continuous-time thermodynamic model, 
the pay-off, which is the
revenue minus expenditure (of the 
insurance company), is modelled
by the Laplace transform 
of \eqref{core},
after deriving the thermodynamic limit
under the extended setting 
in Theorem \ref{Th3.2}. 

In sections \ref{BMM} and \ref{Bessel2}, we introduce two specific models where 
we can calculate 
the formula \eqref{core} analytically. 
In the former, we rely on 
the local constancy of the killing rates, which can however approximate fairly general rates. 
If the approximated rate is highly non-linear, then we need to resort to 
the numerical algorithm proposed in 
Theorem \ref{Th4.2}, which is among the 
mathematical 
contributions of the present paper. 
In contrast, 
the latter model gives us completely an analytical expression of \eqref{core}
using the symmetry of the 2-dimensional Bessel process.

\ 

There are some studies analysing the 
surrender risk in the spirit of quantitative finance, such as
\cite{LeCNak} and more recently \cite{Ballotta}, which use continuous time processes similar to ours; however, these are not concerned with heterogeneity. 
They model the lapses by ``jumps", 
that is, exogenous events. 
Such an approach might be called a
``reduced form approach".
From this point of view, our model can be understood as a {\em structural} version
of, for example, the model proposed by O. Le Courtois and H. Nakagawa \cite{LeCNak}
(see Remark \ref{comonCN} in Section \ref{seckaiyaku} below).

\section*{Acknowledgments}
The authors would like to thank Corina Constantinescu (Univ. Liverpool) and Gregory Markowsky (Monash Univ.) for review and for valuable comments and suggestions.

\section{Model without Surrender Risk} \label{secsibou}
\subsection{Basic Model}\label{secsibou1}

Let $X_t$ be a time-homogeneous diffusion process in $ \mathbf{R}^d $, 
\begin{equation}\label{Markov0}
( \{ X_t : t \in \mathbf{R}_+ \}, \{\mathbf{P}^x : x \in \mathbf{R}^n \} ), 
\quad \mathbf{P}^x(X_0=x)=1.
\end{equation}
Here, we consider $X$ to be a
quantified personal profile 
such as health, economic state, or other conditions of a person, which depend on a time parameter $t \in \mathbf{R}_+$. 
The infinitesimal generator of $X$ will be denoted by $ \mathcal{L} $. 
\begin{ass}\label{a0}
We assume that the transition probability of $X$ has a smooth density; that is,
there exists a smooth function $ q $ such that
\begin{equation}
\mathbf{P}^x (X_t \in A) = \int_A q(t,x,y) \,dy, \quad (A \in \mathcal{B}(\mathbf{R}^d)). 
\end{equation}
We further assume that 
$\mathbf{P}^x (X_t \in A)$ is a continuous function 
in $ x $ 
for any fixed $ t > 0 $ and $ A \in \mathcal{B} (\mathbf{R}^d) $. %\textcolor{red}{$t$?} 
\end{ass}
 
Our model for life insurance is as follows. 
Let $N \in \mathbf{N}$ and $\mathcal{I}=\{1,\cdots,N\}$ be the number and the set of the initial insured, respectively. 
The initial personal profile of each insured is $x^i \in \mathbf{R}^d $ for $i \in \mathcal{I}$.
We assume that each of $X^i ~(i=1,\cdots,N)$ %are mutually independent, each 
is distributed as $\mathbb{P}^{x^i}$. 
We also assume that the initial profile of each insured takes values only in
\begin{equation*}
 N^{-1} \mathbf{Z}^d \equiv \{  k/N : k \in \mathbf{Z}^d \}
%\cap \mathbf{R}^d.
\end{equation*}

We define a probability measure $\mu_N$ on $ \mathbf{R}^d $ whose support is  $ N^{-1} \mathbf{Z}^d $ by
\begin{equation}\label{init distribution}
\mu_N(A):=\frac{1}{N}\sharp\{i \in \mathcal{I}:x^i \in A \} \ \text{for} \ A \in \mathcal{B}(\mathbf{R}^d).
\end{equation}
We call it the {\em initial distribution measure}. 

\begin{ass}\label{a1}
We assume 
that 
there exists $f \geq 0$ with $\|f\|_1=1$ such that
\begin{equation}\label{measure convergence}
\lim_{N\to \infty}\int_{\mathbf{R}^d} h\left(\dfrac{x}{N}\right) \mu_N\left(dx \right)=\int_{\mathbf{R}^d} h(x)f(x)dx 
\end{equation}
for any bounded continuous function $h$.
\end{ass}

Let a random counting measure $v^N $ be defined by
\begin{equation*}
v^N (t,A) := \frac{1}{N} \sharp \{i \in \mathcal{I} : X^i_t \in A \}, \quad
(A \in \mathcal{B}(\mathbf{R}^d)). 
\end{equation*}
This expresses the proportion of the insured 
whose profile is in $ A $ at time $t$. 

\begin{thm}\label{T1}
Under {\rm Assumptions \ref{a0}} and {\rm \ref{a1}}, we have that 
\begin{equation*}
\lim_{N \to \infty} \mathbb{E} [ v^N (t,A) ]
= \int_A \mathbb{E} [ f(X^*_t) |X^*_0 =x ] \,dx,
\end{equation*}
where $ X^* $ is the adjoint process of $ X $, that is, 
a diffusion process whose 
infinitesimal generator is $ \mathcal{L}^* $,
the adjoint operator of $\mathcal{L}$ with respect to
$ L^2 (\mathbf{R}^d, \mathrm{Leb}) $.
\end{thm}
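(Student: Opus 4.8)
The plan is to compute $\mathbb{E}[v^N(t,A)]$ directly from its definition, recognize it as an integral against the (rescaled) initial distribution measure $\mu_N$, and then pass to the limit using Assumption \ref{a1}. First I would write
\begin{equation*}
\mathbb{E}[v^N(t,A)] = \frac{1}{N}\sum_{i\in\mathcal{I}} \mathbf{P}^{x^i}(X^i_t \in A) = \frac{1}{N}\sum_{i\in\mathcal{I}} g(t,x^i,A),
\end{equation*}
where $g(t,x,A) := \mathbf{P}^x(X_t\in A) = \int_A q(t,x,y)\,dy$. By the definition \eqref{init distribution} of $\mu_N$, this sum is exactly $\int_{\mathbf{R}^d} g(t,x,A)\,\mu_N(dx)$; after the rescaling built into Assumption \ref{a1} (the insured profiles live in $N^{-1}\mathbf{Z}^d$, and \eqref{measure convergence} tests $\mu_N$ against $h(x/N)$), this becomes $\int_{\mathbf{R}^d} h(x/N)\,\mu_N(dx)$ with $h(x) = g(t,x,A)$. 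Since $g(t,\cdot,A)$ is continuous in $x$ by Assumption \ref{a0} and bounded by $1$, Assumption \ref{a1} applies and gives
\begin{equation*}
\lim_{N\to\infty}\mathbb{E}[v^N(t,A)] = \int_{\mathbf{R}^d} g(t,x,A)\,f(x)\,dx = \int_{\mathbf{R}^d} f(x)\left(\int_A q(t,x,y)\,dy\right)dx.
\end{equation*}

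The remaining task is to identify the right-hand side with $\int_A \mathbb{E}[f(X^*_t)\mid X^*_0 = y]\,dy$. Here I would invoke the defining property of the adjoint: if $q^*(t,y,x)$ denotes the transition density of $X^*$ (generated by $\mathcal{L}^*$), then $q^*(t,y,x) = q(t,x,y)$, since $\mathcal{L}^*$ is the $L^2(\mathrm{Leb})$-adjoint of $\mathcal{L}$ and the transition semigroups are mutually adjoint, so their kernels are transposes of one another. Then by Fubini's theorem,
\begin{equation*}
\int_{\mathbf{R}^d} f(x)\left(\int_A q(t,x,y)\,dy\right)dx = \int_A \left(\int_{\mathbf{R}^d} q^*(t,y,x)f(x)\,dx\right)dy = \int_A \mathbb{E}[f(X^*_t)\mid X^*_0=y]\,dy,
\end{equation*}
which is the claimed identity (with the bound variable renamed).

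The routine steps — interchanging finite sums with expectations, and the final Fubini — are unproblematic given Assumption \ref{a0}'s smoothness and the boundedness of $g(t,\cdot,A)$. The one point deserving care, and the main obstacle, is the justification that $q^*(t,y,x) = q(t,x,y)$ as densities for the adjoint diffusion $X^*$: one must know that $\mathcal{L}^*$ actually generates a (conservative) diffusion process with a transition density, and that this density is precisely the transpose kernel of $q$. This is standard under the implicit regularity hypotheses on $X$ (e.g. uniform ellipticity and smooth bounded coefficients, which make $q$ a classical fundamental solution of both the forward and backward Kolmogorov equations), and I would state it as the key input; alternatively, one can bypass $X^*$ entirely and simply \emph{define} $\mathbb{E}[f(X^*_t)\mid X^*_0=y] := \int q(t,x,y)f(x)\,dx$, in which case the identification is a tautology and the theorem reduces to the limit computation above.
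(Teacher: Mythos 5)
Your proposal is correct and follows essentially the same route as the paper's own proof: rewrite $\mathbb{E}[v^N(t,A)]$ as an integral of $x\mapsto\mathbf{P}^x(X_t\in A)$ against $\mu_N$, pass to the limit via Assumption \ref{a1}, then apply Fubini and identify the resulting density $\int q(t,y,x)f(y)\,dy$ with $\mathbf{E}^x[f(X^*_t)]$. The only difference is that you make explicit the kernel identity $q^*(t,y,x)=q(t,x,y)$ underlying the adjoint-process identification, a point the paper treats as implicit.
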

\begin{proof}
%Let us calculate $\mathbb{E} [ v^N (t,A) ]$; 
Observe that 
\begin{equation*}
\begin{split}
\mathbb{E}[v^N(t,A)]&=\mathbb{E}\left[\sum_{j \in \mathcal{I}}
\frac{1}{N}1_{\{i \in \mathcal{I} :X_{t}^i \in A\}}(j) \right]\\
&=\frac{1}{N}\sum_{i \in \mathcal{I}}\mathbb{P}(X_{t}^i \in A)=\frac{1}{N}\sum_{i \in \mathcal{I}}\mathbf{P}^{x^i}(X_{t} \in A)\\
&=\frac{1}{N}\sum_{k \in \mathbf{Z}^d }\mathbf{P}^{\frac{k}{N}}(X_{t} \in A)\mu_N(\{k/N\})N\\
&=\sum_{k \in \mathbf{Z}^d }\mathbf{P}^{\frac{k}{N}}(X_{t} \in A)\mu_N(\{k/N\}). 
\end{split}
\end{equation*}
%Taking the limit as $N$ tends to $\infty$, 
Letting $ N \to \infty $, 
we get %the following formula
\begin{equation*}
\begin{split}
\lim_{N \to \infty} \mathbb{E} [ v^N (t,A) ]=\int_{\mathbf{R}^d} \mathbf{P}^x(X_{t} \in A)f(x)dx
\end{split}
\end{equation*}
by using \eqref{measure convergence} in Assumption \ref{a1}.
Moreover, this formula can be rewritten 
(where we can use Fubini's theorem since both $\mathbb{P}^x(X_{t} \in A)$ and $f(x)$ are positive functions) as  
\begin{equation*}
\begin{split}
\lim_{N \to \infty} \mathbb{E} [ v^N (t,A) ] &= \int \mathbf{P}^y (X_t \in A) f(y) dy \\
       &= \int \! \! \left( \int  q(t,y,x) 1_A (x) dx \right) f(y) dy \\
       &= \int 1_A(x) dx \int q (t,y,x) f(y) dy.\\
\end{split}
\end{equation*}
Therefore, $ A \mapsto u(t,A) $ is an absolutely continuous measure with respect to Lebesgue measure, and its density is given by
\begin{equation*}
u(t,x) := \int q (t,y,x) f(y) dy .
\end{equation*}
Using the adjoint  process (with respect to Lebesgue measure) 
$ X^* $, the function $ u $ can be expressed as
\begin{equation}  \label{ninzuu}
u(t,x) = \mathbb{E} [f( X^*_t) |X^*_0 = x ]
=: \mathbf{E}^x [f(X^*_t)],
\end{equation}
where we abuse the notation 
$ \mathbf{E}^x $ a bit. 
\end{proof}

\subsection{Mortality Model}\label{mort}
We introduce a mortality model, where the force of mortality is dependent only on the current personal profile.
In the model, 
the insured do not surrender the policy. 

Let $\hat{X}$ be a killed process 
obtained from the Markov process \eqref{Markov0} defined 
with a random time $ \zeta $ as follows,
\begin{equation}\label{Markov1}
\hat{X}_t  = X_t 1_{ \{ \zeta > t \} } + \infty 1_{ \{ \zeta \leq t\} }.
\end{equation}
\begin{ass}\label{a3}
We assume that 
\begin{equation}\label{a1f}
\mathbb{P} ( \zeta > t \,| \,\sigma (X_s : s \leq t)  ) = e^{- \int_0^t V(X_s)\,ds }
\end{equation}
and
\begin{equation}
\lim_{t \to \infty}\mathbb{P} ( \zeta > t \,| \,\sigma (X_s : s \leq t)  )=0 \label{sibou}
\end{equation}
with a positive function $ V $.
\end{ass}

\begin{rem}
The mortality in our model is consistent with
the ones \cite{VMS}, and \cite{WM}, \cite{YMV} in demography.  
\end{rem}
Then, 
we know that (see e.g. \cite[chapter III, Theorem 18.6]{RW}) the 
transition semigroup $ (\hat{T}_t)_{t \geq 0} $ of $ \hat{X} $ is given by
\begin{equation*}
\begin{split}
\hat{T}_t f (x) &:= 
\mathbb{E} [ f (\hat{X}_s) | X_0 = x ] \\
&= \mathbb{E} [ f (X_s) e^{-\int_0^t V(X_s) \,ds} | X_0 = x ],
~~( f \in C_0 (\mathbf{R}^d)).
\end{split}
\end{equation*}
By the Feynman-Kac formula (see e.g. \cite[chapter III 19.]{RW}), 
the infinitesimal generator $ \hat{\mathcal{L}} $ of the semigroup $\hat{T}$ is given, using the infinitesimal generator $\mathcal{L}$ of $T$, as follows
\begin{equation*}
\hat{\mathcal{L}} f(x)  = \mathcal{L} f(x) - V(x) f(x), 
\quad ( f \in \mathscr{D} (\mathcal{L}) ), 
\end{equation*}
where 
the domain of $ \mathcal{L} $
is, as usual, 
the space of such $ f $ that 
\begin{equation*}
    \lim_{t\to 0} \frac{\hat{T}_t f -f}{t}
\end{equation*}
exists in $ C_0 (\mathbf{R}^d)$. 
We will sometimes use the notation 
\begin{equation*}
    \hat{T}_t f (x)
    = \mathbf{E}^x[f(\hat{X})]= \mathbf{E}^x[f({X})
    e^{-\int_0^t V(X_s) \,ds}]
\end{equation*}
for the purpose of clarifying 
both the starting point and the sample path which we are looking at, even though this may again be a bit of an abuse of notation. 
\begin{ass}\label{that}
We assume that %transition probability of 
$\hat{T}_t $ has a density, that is,
there exists a smooth function $ q_V $ such that
\begin{equation*}
\hat{T}_t f (x) = \int_{\mathbf{R}^d} q_V (t,x,y) \,f(y) \,dy.
\end{equation*}
\end{ass}

Let the model size be $ N $ as in 
the preceding section, 
and the initial 
distribution measure $ \mu^N $ 
be as \eqref{init distribution}. 
Let a random counting measure $v^N$ be redefined by
\begin{equation*}
v^N (t, A) = \frac{1}{N}\sharp \{ i \in \mathcal{I} : \hat{X}^i_t \in A \}, \quad
(A \in \mathcal{B}(\mathbf{R}^d)). 
\end{equation*}

\begin{thm}\label{T27}
Let $ A \in \mathcal{B}(\mathbf{R}^d) $. 
Under Assumptions  \ref{that} and {\rm \ref{a1}}, we have that 
\begin{equation}\label{VXstar}
\lim_{N \to \infty} \mathbb{E} [ v^N (t,A) ]
= \int_A \mathbb{E} [ f(X^*_t)e^{-\int_0^tV(X^*_s)ds} |X^*_0 =x ] \,dx.
\end{equation}
\end{thm}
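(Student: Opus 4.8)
The plan is to repeat the argument of Theorem~\ref{T1} almost verbatim, with the transition probability $\mathbf P^x(X_t\in A)$ replaced by the killed transition probability $\mathbb P^x(\hat X_t\in A)=\hat T_t 1_A(x)$, and with the adjoint diffusion replaced by the adjoint diffusion \emph{killed at the same rate}. First I would compute $\mathbb E[v^N(t,A)]$ in closed form: by linearity of expectation, and since each $\hat X^i$ starts at $x^i\in N^{-1}\mathbf Z^d$ and evolves as the killed process,
\[
\mathbb E[v^N(t,A)]=\frac1N\sum_{i\in\mathcal I}\mathbb P(\hat X^i_t\in A)
=\sum_{k\in\mathbf Z^d}(\hat T_t 1_A)(k/N)\,\mu_N(\{k/N\})
=\int_{\mathbf R^d}(\hat T_t 1_A)(x)\,\mu_N(dx),
\]
where $\mathbb P^x(\hat X_t\in A)=\mathbf E^x[1_A(X_t)e^{-\int_0^t V(X_s)\,ds}]=\int_A q_V(t,x,y)\,dy$ is well defined for every Borel $A$ by Assumption~\ref{that}, which lets us extend $\hat T_t$ from $C_0$ to bounded measurable functions through the kernel $q_V$.

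Next I would let $N\to\infty$ using Assumption~\ref{a1}. The integrand $x\mapsto(\hat T_t 1_A)(x)=\int_A q_V(t,x,y)\,dy$ is bounded by $1$ (since $\hat T_t 1_A\le\hat T_t 1\le 1$) and continuous in $x$ by smoothness of $q_V$ together with dominated convergence; this continuity is the killed-process counterpart of the continuity hypothesis already built into Assumption~\ref{a0}. Hence \eqref{measure convergence} gives
\[
\lim_{N\to\infty}\mathbb E[v^N(t,A)]=\int_{\mathbf R^d}\Big(\int_A q_V(t,x,y)\,dy\Big)f(x)\,dx=\int_A\Big(\int_{\mathbf R^d}q_V(t,x,y)f(x)\,dx\Big)dy,
\]
the interchange of integrals being legitimate by Tonelli's theorem because $q_V\ge 0$ and $f\ge 0$. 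So the limiting measure $A\mapsto\lim_N\mathbb E[v^N(t,A)]$ is absolutely continuous with density $u_V(t,y):=\int q_V(t,x,y)f(x)\,dx$.

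It remains to identify $u_V(t,y)$ with $\mathbb E[f(X^*_t)e^{-\int_0^t V(X^*_s)\,ds}\mid X^*_0=y]$, and this is the one genuinely delicate point. Since $q_V(t,x,y)$ is the integral kernel of $\hat T_t$, the map $y\mapsto\int q_V(t,x,y)f(x)\,dx$ is $(\hat T_t)^{*}f(y)$, the $L^2(\mathbf R^d,\mathrm{Leb})$-adjoint semigroup applied to $f$. Because multiplication by $V$ is self-adjoint, $\hat{\mathcal L}^{*}=(\mathcal L-V)^{*}=\mathcal L^{*}-V$; that is, the adjoint of the killed generator is the generator of the adjoint diffusion $X^{*}$ killed at the same rate $V$. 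By the Feynman--Kac formula applied to $X^{*}$ exactly as in Section~\ref{mort}, the semigroup with generator $\mathcal L^{*}-V$ is represented by $f\mapsto\mathbf E^{y}[f(X^{*}_t)e^{-\int_0^t V(X^{*}_s)\,ds}]$, whence $u_V(t,y)=\mathbb E[f(X^{*}_t)e^{-\int_0^t V(X^{*}_s)\,ds}\mid X^{*}_0=y]$ and therefore \eqref{VXstar}. I expect the main obstacle to be precisely this adjoint/duality identification --- checking at the level of kernels that the $L^2$-adjoint of the Feynman--Kac semigroup of $X$ with potential $V$ is the Feynman--Kac semigroup of $X^{*}$ with the same potential --- everything else being a line-by-line transcription of the proof of Theorem~\ref{T1}, with the continuity of $x\mapsto\hat T_t 1_A(x)$ needed for Assumption~\ref{a1} being only a minor technical check supplied by Assumption~\ref{that}.
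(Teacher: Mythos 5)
Your argument is correct and follows the paper's proof essentially verbatim: compute $\mathbb E[v^N(t,A)]$ as a $\mu_N$-integral of the killed transition probability, pass to the limit via Assumption \ref{a1}, swap integrals by Tonelli, and identify the resulting density through the adjoint semigroup. The only difference is that you spell out the duality step (that the $L^2$-adjoint of the Feynman--Kac semigroup of $X$ with potential $V$ is the Feynman--Kac semigroup of $X^*$ with the same $V$, since multiplication by $V$ is self-adjoint), which the paper compresses into ``by the same procedure as in the proof of Theorem \ref{T1}''; your version is, if anything, more complete.
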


\begin{proof}
Let us calculate $\mathbb{E} [ v^N (t,A) ]$ as
\begin{equation*}
\begin{split}
\mathbb{E}[v^N(t,A)]&=\mathbb{E}\left[\sum_{j \in \mathcal{I}}
\frac{1}{N}1_{\{i \in \mathcal{I}:\hat{X}_{t}^i \in A\}}(j) \right]\\
&=\frac{1}{N}\sum_{i \in \mathcal{I}}\mathbb{P}(\hat{X}_{t}^i \in A)=\frac{1}{N}\sum_{i \in \mathcal{I}}\mathbf{P}^{x^i}(\hat{X}_{t} \in A)\\
%&=\frac{1}{N}\sum_{i \in \mathcal{I}}\hat{T}1_A(x^i)=\frac{1}{N}\sum_{k \in \mathbf{Z}^d }\hat{T}1_A(k/N)\mu_N(\{k/N\})N\\
&=\sum_{k \in \mathbf{Z}^d }\mathbf{P}^{k/N}(\hat{X}_{t} \in A)\mu_N(\{k/N\}).\\
\end{split}
\end{equation*}
Letting $N \to \infty$, we get the formula
\begin{equation*}
\begin{split}
\lim_{N \to \infty} \mathbb{E} [ v^N (t,A) ]=\int_{\mathbf{R}^d} \mathbf{P}^{y}(\hat{X}_{t} \in A)(y) f(y) dy
\end{split}
\end{equation*}
by using \eqref{measure convergence}
in Assumption \ref{a1}.
Then, by the same 
procedure as we used in the proof of Theorem \ref{T1}, we get \eqref{VXstar}.
\if1
Moreover, this formula can be rewritten, by using Fubini's theorem since both $\hat{T}1_A(k/N)$ and $f(x)$ are positive functions,
as
\begin{equation*}
\begin{split}
\lim_{N \to \infty} \mathbb{E} [ v^N (t,A) ] &= \int \hat{T}_t (1_A) (y) f(y) dy \\
&= \int \! \! \left( \int  q_V (t,y,x) 1_A (x) dx \right) f(y) dy \\
&= \int 1_A(x) dx \int q_V (t,y,x) f(y) dy .
\end{split}
\end{equation*}
Therefore, $ A \mapsto u(t,A) $ is a measure that is absolutely continuous
with respect to Lebesgue measure, and its density is given by
\begin{equation*}
u(t,x) := \int q_V (t,y,x) f(y) dy .
\end{equation*}
Using the adjoint Markov process (for  Lebesgue measure) 
$ X^* $, the function $ u $ can be expressed as 
\begin{equation*} 
u(t,x) =  \mathbb{E}^x [f( X^*_t) e^{- \int_0^t V ( X^*_t) \,dt } ].
\end{equation*}
\fi
\end{proof}

%ジャンプでモデル化した場合との対比

\subsection{Discrete-Time Level-Premium Insurance Model}\label{DTInsurance}
%Let $ N \in \mathbf{N} $ be the model size as above.  The initial distribution measure $ \mu_N $ is as \eqref{init distribution}.
We consider %the case 
an insurance 
where both the premium and the insurance proceeds are paid
discretely 
at time $ t = 0,1,\cdots $. 
Let $p$ and $A$ be the (level) premium and the sum insured, respectively. 
Then, the revenue of the insurance company $ v_t(p) $ at each time $t$ is given by
\begin{equation*}
\begin{split}
v_t (p) &= p \cdot \sharp \{ i \in \mathcal{I} : \hat{X}^i_t \ne \infty \}
\quad (t=0,1, 2 ,...).
\end{split}
\end{equation*}

In our model, the expenditure of the insurance company $c_t$ at each time t is given by
\begin{equation*}
\begin{split}
&c_0 = 0, \\
& c_t = A \cdot \sharp \{ i \in \mathcal{I} :  \hat{X}^i_t = \infty ,\hat{X}^i_{t-1} \ne \infty\} 
\quad (t= 1, 2,...).
\end{split}
\end{equation*}
Let the expected return 
$ R_d $
of the discrete-time insurance model at time $0$ be defined by
\begin{equation}\label{RNnp}
R_d (N,p):=\sum_{t=0}^\infty e^{-rt}\mathbb{E}[v_t(p)-c_t],
\end{equation}
and $p_d (N)$ be the 
solution of $ R_d (N,p;\mathbf{N}) = 0 $.
Note that the solution is unique 
and 
strictly positive since 
\eqref{RNnp} is linear in $ p $ 
and we have clearly
\begin{equation*}
\sum_{t=0}^\infty e^{-rt}
\mathbb{P} ( i \in \mathcal{I} : \hat{X}^i_t \ne \infty )
>0 
\end{equation*}
and
\begin{equation*}
A \sum_{t=0}^\infty e^{-rt}
\mathbb{E}[ \sharp \{ i \in \mathcal{I} :  \hat{X}^i_t = \infty ,\hat{X}^i_{t-1} \ne \infty\} ]
>0
\end{equation*}
by \eqref{sibou}
in Assumption \ref{a3}. 
\begin{thm} \label{premium1}
We have that
\begin{equation*}
\begin{split}
p_d(N)= %\nonumber \\
\frac{\displaystyle A\sum_{t=1}^\infty e^{-rt}\sum_{k \in \mathbf{Z}^d }\mathbf{E}^{\frac{k}{N}}[e^{-\int_0^{t-1} V(X_s)ds}-e^{-\int_0^t V(X_s)ds}]\mu_N(\{k/N\})}
{\displaystyle \sum_{t=0}^\infty e^{-rt} \sum_{k \in \mathbf{Z}^d }\mathbf{E}^{\frac{k}{N}}[e^{-\int_0^t V(X_s)ds}]\mu_N(\{k/N\})}.
\end{split}
\end{equation*}
Moreover, $p_d (\infty)$ defined by
\begin{equation*}
p_d (\infty):=\lim_{N \to \infty}p_d (N)
\end{equation*}
is given by
\begin{equation*}
\begin{split}
&p_d(\infty)= 
\frac{\displaystyle A\sum_{t=1}^\infty e^{-rt}\int_{\mathbf{R}^d} \mathbf{E}^{y}[e^{-\int_0^{t-1} \tilde{V}(X^*_s)ds}f(X^*_{t-1})-e^{-\int_0^t \tilde{V}(X^*_s)ds}f(X^*_t)]dy}
{\displaystyle \sum_{t=0}^\infty e^{-rt} \int_{\mathbf{R}^d} \mathbf{E}^{y}[e^{-\int_0^t \tilde{V}(X^*_s)ds}f(X^*_t)]dy}, 
\end{split}
\end{equation*}
using the adjoint process. 
\end{thm}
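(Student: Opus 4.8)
The plan is to turn $R_d(N,p)=0$ into an elementary linear equation in $p$, solve it for $p_d(N)$, and then let $N\to\infty$ term by term using Assumption~\ref{a1} and the adjoint representation already established in Theorem~\ref{T27}. For the pre-limit formula, note first that by Assumption~\ref{a3} one has $\mathbf{P}^x(\hat{X}_t\neq\infty)=\mathbf{P}^x(\zeta>t)=\mathbf{E}^x[e^{-\int_0^t V(X_s)ds}]$ and, for $t\geq1$, $\mathbf{P}^x(\hat{X}_t=\infty,\ \hat{X}_{t-1}\neq\infty)=\mathbf{P}^x(t-1<\zeta\leq t)=\mathbf{E}^x[e^{-\int_0^{t-1}V(X_s)ds}-e^{-\int_0^{t}V(X_s)ds}]$. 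Summing over $i\in\mathcal{I}$ and grouping the insured by their shared initial profile $k/N\in N^{-1}\mathbf{Z}^d$, exactly as in the proof of Theorem~\ref{T1}, gives
\begin{equation*}
\mathbb{E}[v_t(p)]=pN\sum_{k\in\mathbf{Z}^d}\mathbf{E}^{k/N}[e^{-\int_0^t V(X_s)ds}]\,\mu_N(\{k/N\}),\qquad\mathbb{E}[c_0]=0,
\end{equation*}
and, for $t\geq1$,
\begin{equation*}
\mathbb{E}[c_t]=AN\sum_{k\in\mathbf{Z}^d}\mathbf{E}^{k/N}[e^{-\int_0^{t-1}V(X_s)ds}-e^{-\int_0^{t}V(X_s)ds}]\,\mu_N(\{k/N\}).
\end{equation*}
Since, as remarked after \eqref{RNnp}, $R_d(N,p)$ is affine in $p$ with a strictly positive $p$-coefficient and a unique root, solving $R_d(N,p)=0$ and cancelling the overall factor $N$ yields the claimed expression for $p_d(N)$; its denominator is $\geq1$ because the $t=0$ term is $\sum_k\mu_N(\{k/N\})=1$, so the division is legitimate.

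For the limit, fix $t$. The map $y\mapsto\mathbf{E}^y[e^{-\int_0^t V(X_s)ds}]=\mathbf{P}^y(\zeta>t)=\int_{\mathbf{R}^d}q_V(t,y,x)dx$ is bounded by $1$ and continuous in $y$, this last point being where the smoothness of $q_V$ in Assumption~\ref{that} enters; the same holds with $t$ replaced by $t-1$. Hence every $\mu_N$-integral above is the integral of a bounded continuous function against $\mu_N$, and the convergence in Assumption~\ref{a1} gives, for each fixed $t$,
\begin{equation*}
\sum_{k\in\mathbf{Z}^d}\mathbf{E}^{k/N}[e^{-\int_0^t V(X_s)ds}]\,\mu_N(\{k/N\})\ \longrightarrow\ \int_{\mathbf{R}^d}\mathbf{E}^{y}[e^{-\int_0^t V(X_s)ds}]\,f(y)\,dy\qquad(N\to\infty),
\end{equation*}
and analogously for the difference kernel $e^{-\int_0^{t-1}V}-e^{-\int_0^{t}V}$. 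Now invoke the identity $\int q_V(t,y,x)f(y)dy=\mathbf{E}^x[f(X^*_t)e^{-\int_0^t V(X^*_s)ds}]$ obtained in the proof of Theorem~\ref{T27} (formulas \eqref{ninzuu}--\eqref{VXstar} with $A=\mathbf{R}^d$) together with Fubini: the right-hand side above equals $\int_{\mathbf{R}^d}\mathbf{E}^{y}[f(X^*_t)e^{-\int_0^t V(X^*_s)ds}]dy$, and the numerator term turns into $\int_{\mathbf{R}^d}\mathbf{E}^{y}[f(X^*_{t-1})e^{-\int_0^{t-1}V(X^*_s)ds}-f(X^*_t)e^{-\int_0^{t}V(X^*_s)ds}]dy$ --- precisely the building blocks of the asserted formula for $p_d(\infty)$ (with $\tilde V=V$).

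It remains to exchange $\lim_{N}$ with the two series over $t$. Every denominator summand lies in $[0,1]$, and every numerator summand lies in $[0,1]$ since $0\leq e^{-\int_0^{t-1}V}-e^{-\int_0^{t}V}\leq1$ (because $V\geq0$); thus both series are dominated by the summable sequence $(e^{-rt})_{t\geq0}$, and dominated convergence lets me pass the limit inside $\sum_t$ in numerator and denominator separately. The limiting denominator is still $\geq1>0$ (its $t=0$ term equals $\int_{\mathbf{R}^d}f=1$), so $p_d(\infty)=\lim_N p_d(N)$ exists and equals the ratio of the two limiting series, which is the stated formula. The only genuinely technical step is the continuity-in-the-starting-point claim in the previous paragraph --- this is what really uses Assumption~\ref{that}, via a dominated-convergence argument for $y\mapsto\int q_V(t,y,x)dx$ --- together with keeping the dominated-convergence bounds uniform in $t$; everything else is bookkeeping.
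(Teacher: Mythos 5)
Your proposal is correct and follows essentially the same route as the paper: condition on $\sigma(X_s:s\le t)$ via Assumption \ref{a3} to express revenue and expenditure through $e^{-\int_0^t V(X_s)ds}$, group the insured by initial profile, solve the affine equation $R_d(N,p)=0$, and pass to the limit termwise using Assumption \ref{a1} and the adjoint representation from Theorem \ref{T27}. Your added justifications (the denominator bounded below by its $t=0$ term, dominated convergence for the interchange of $\lim_N$ with $\sum_t$, and the identification $\tilde V=V$) only make explicit what the paper leaves implicit.
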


\begin{proof}
We first obtain the expected return $ R_d (N,p) $ at time $0$.
The expected revenue at time $0$ is 
\begin{align*}
\sum_{t=0}^\infty e^{-rt}\mathbb{E}[v_t(p)] 
=&\sum_{t=0}^\infty e^{-rt}\mathbb{E} 
\left[ p \cdot 1_{\{i \in \mathcal{I}: \hat{X}_t^i \ne \infty \}}\right] \\%\nonumber \\
=&p\sum_{t=0}^\infty e^{-rt} \sum_{i \in \mathcal{I}}\mathbb{P}(\hat{X}_{t}^i \ne \infty)\\
=&p\sum_{t=0}^\infty e^{-rt} \sum_{k \in \mathbf{Z}^d }\mathbf{P}^{\frac{k}{N}}(\hat{X}_{t} \ne \infty)\mu_N(\{k/N\})N.
\end{align*}
Since $ \{ \hat{X}_t \ne \infty \}=
\{ \zeta > t \} $, the expected revenue is now given by
\begin{align}
&=p\sum_{t=0}^\infty e^{-rt} \sum_{k \in \mathbf{Z}^d }\mathbf{P}^{\frac{k}{N}}(\zeta>t)\mu_N(\{k/N\})N \nonumber\\
&=p\sum_{t=0}^\infty e^{-rt} \sum_{k \in \mathbf{Z}^d }
\mathbf{E}^{\frac{k}{N}}[\mathbf{E}^{\frac{k}{N}}[1_{\{\zeta>t\}}|\sigma(X_s:s\leq t)]]\mu_N(\{k/N\})N\nonumber\\
&=p\sum_{t=0}^\infty e^{-rt} \sum_{k \in \mathbf{Z}^d }\mathbf{E}^{\frac{k}{N}}[e^{-\int_0^t V(X_s)ds}]\mu_N(\{k/N\})N \label{revenue1}.
\end{align}
Next, we calculate the expected expenditure at time $0$ as
\begin{align*}
%&\sum_{t=0}^\infty e^{-rt}
\mathbb{E}[c_t] 
=&%\sum_{t=1}^\infty e^{-rt}
A\mathbb{E}\left[ 1_{\{i \in \mathcal{I}: \hat{X}_t^i=\infty,\hat{X}_{t-1}^i\ne\infty \}}\right] \\
=&A%\sum_{t=1}^\infty e^{-rt} 
\sum_{i \in \mathcal{I}}\mathbb{P}(\hat{X}_{t}^i =\infty , \hat{X}_{t-1}^i \ne \infty) \\
=&A%\sum_{t=1}^\infty e^{-rt} 
\sum_{k \in \mathbf{Z}^d }\mathbf{P}^{\frac{k}{N}}(\hat{X}_{t} =\infty , \hat{X}_{t-1} \ne \infty)\mu_N(\{k/N\})N.
\end{align*}
As above we can use
expressions with
$ \zeta $ instead, and 
the expected expenditure is now 
\begin{align}
&=A%\sum_{t=1}^\infty e^{-rt} 
\sum_{k \in \mathbf{Z}^d }\mathbf{P}^{\frac{k}{N}}(t-1<\zeta \leq t)\mu_N (\{k/N\})N\nonumber \\
&=A%\sum_{t=1}^\infty e^{-rt} 
\sum_{k \in \mathbf{Z}^d }
\left(\mathbf{E}^{\frac{k}{N}}[
\mathbf{E}^{\frac{k}{N}}[1_{\{\zeta>t-1\}}%\mid 
%\sigma(X_s:s\leq t-1)]]
-%\mathbb{E}^{\frac{k}{N}}[
%\mathbb{E}[
1_{\{\zeta>t\}}\mid \sigma(X_s:s\leq t)]]\right)
\mu_N(\{k/N\}) N\nonumber\\
&=A%\sum_{t=1}^\infty e^{-rt}
\sum_{k \in \mathbf{Z}^d }\mathbf{E}^{\frac{k}{N}}[e^{-\int_0^{t-1} V(X_s)ds}-e^{-\int_0^t V(X_s)ds}]\mu_N(\{k/N\})N\label{expenditure1}.
\end{align}
Using the formulas \eqref{revenue1} and \eqref{expenditure1}, the expected return $ R_d (N,p) $ is calculated as follows:
\begin{equation}
\begin{split}
&R_d(N,p)\\
=&\sum_{t=0}^\infty e^{-rt}\mathbb{E}[v_t(p)-c_t]\\
=&p\sum_{t=0}^\infty e^{-rt} \sum_{k \in \mathbf{Z}^d }\mathbf{E}^{\frac{k}{N}}[e^{-\int_0^t V(X_s)ds}]\mu_N(\{k/N\})N \\
&\quad -A\sum_{t=1}^\infty e^{-rt}\sum_{k \in \mathbf{Z}^d }\mathbf{E}^{\frac{k}{N}}[e^{-\int_0^{t-1} V(X_s)ds}-e^{-\int_0^t V(X_s)ds}]\mu_N(\{k/N\})N. \label{expect return1}
\end{split}
\end{equation}
Thus, 
%by using the principle of equivalence with respect return \eqref{expect return1}, 
the premium $ p_d (N) $ 
is given by
\begin{equation}
p_d (N)=\frac{\displaystyle A\sum_{t=1}^\infty e^{-rt}\sum_{k \in \mathbf{Z}^d }\mathbf{E}^{\frac{k}{N}}[e^{-\int_0^{t-1} V(X_s)ds}-e^{-\int_0^t V(X_s)ds}]\mu_N(\{k/N\})}
{\displaystyle \sum_{t=0}^\infty e^{-rt} \sum_{k \in \mathbf{Z}^d }\mathbf{E}^{\frac{k}{N}}[e^{-\int_0^t V(X_s)ds}]\mu_N(\{k/N\})}. \label{p1}
\end{equation}
%Taking the limit as $N$ tends to $\infty$ with respect to \eqref{p1}, 
Letting $ N \to \infty $, 
which is possible since each term 
in \eqref{p1} converges by Assumption \ref{a1},
we get the following formula: 
\begin{equation*}
p_d (\infty)=
\frac{\displaystyle A\sum_{t=1}^\infty e^{-rt}\int_{\mathbf{R}^d} \mathbb{E}^{x}[e^{-\int_0^{t-1} V(X_s)ds}-e^{-\int_0^t V(X_s)ds}]f(x)dx}
{\displaystyle \sum_{t=0}^\infty e^{-rt} \int_{\mathbf{R}^d} \mathbb{E}^{x}[e^{-\int_0^t V(X_s)ds}]f(x)dx}.
\end{equation*}
Moreover, this formula can be rewritten by using the adjoint process and the potential $ \tilde{V} $ of $\mathcal{L}^*$ as follows:
%ここを直す
\begin{equation*}
\begin{split}
& p_d (\infty)= \\
&\frac{\displaystyle A\sum_{t=1}^\infty e^{-rt}\int_{\mathbf{R}^d}\mathbb{E}^{y}[e^{-\int_0^{t-1} \tilde{V}(X^*_s)ds}f(X^*_{t-1})-e^{-\int_0^t \tilde{V}(X^*_s)ds}f(X^*_t)]dy}
{\displaystyle \sum_{t=0}^\infty e^{-rt} \int_{\mathbf{R}^d}\mathbb{E}^{y}[e^{-\int_0^t \tilde{V}(X^*_s)ds}f(X^*_t)]dy}.
\end{split}
\end{equation*}
\end{proof}

\subsection{Continuous-Time Level-Premium Insurance Model}\label{cont}
We will consider a continuous-time payment model
in this section.
The revenue of the insurance company during $ [0,t] $
is given by
\begin{equation*}
    \int_0^t e^{-r s}
    v_s (p) \,ds,
\end{equation*}
where 
\begin{equation*}
\begin{split}
v_t (p) &= p \cdot \sharp \{ i \in \mathcal{I} : \hat{X}^i_t \ne \infty \},
\end{split}
\end{equation*}
while 
the expenditure of the insurance company during $[0,t]$ is given by 
\begin{equation*}
\begin{split}
%&c_0 = 0, \\
& C_t = \sum_{j \in \mathcal{I}} A e^{-r \zeta^j} 1_{\{i \in \mathcal{I}: \zeta^i<t \}}(j).
\end{split}
\end{equation*}
Note that $ C_t $ is increasing and bounded so that $ c_\infty $ exists is and is finite almost surely. In fact, it is bounded by $ A N $.

Let the expected return 
of the continuous time model at time $0$ 
be defined by
\begin{equation*}
R_c (N,p):=\int_{0}^\infty e^{-rt}\mathbb{E}[v_t(p)]dt
- \mathbb{E}[C_\infty],
\end{equation*}
and $p_c (N)$ be the 
solution of $R_c (N, p ) = 0 $. 
Note that the solution is unique 
and 
strictly positive
by the same reasoning 
as in Section \ref{DTInsurance}.
\begin{thm}\label{Th3.5}
We have that 
\begin{align}
p_c (N)=\frac{\displaystyle A\int_{0}^\infty e^{-rt}\sum_{k \in \mathbf{Z}^d }\mathbf{E}^{\frac{k}{N}}[V(X_s)e^{-\int_0^{t} V(X_s)ds}]\mu_N(\{k/N\})dt}
{\displaystyle \int_0^\infty e^{-rt} \sum_{k \in \mathbf{Z}^d }\mathbf{E}^{\frac{k}{N}}[e^{-\int_0^{t} V(X_s)ds}]\mu_N(\{k/N\})dt}.
\end{align}
Moreover, $p_c(\infty)$ defined by
\begin{equation*}
p_c (\infty):=\lim_{N \to \infty}p_c(N)
\end{equation*}
is given by
\begin{align}\label{pxstar2}
p_c(\infty)=\frac{\displaystyle  A\int_0^\infty e^{-rt}\int_{\mathbf{R}^d}
\mathbf{E}^y
[f(X^*_t)\tilde{V}(X^*_t)e^{-\int_0^{t} \tilde{V}(X^*_s)ds}]dydt}
{\displaystyle \int_0^\infty e^{-rt} \int_{\mathbf{R}^d}\mathbf{E}^{y}[f(X^*_t)e^{-\int_0^{t} \tilde{V}(X^*_s)ds}]dydt},
\end{align}
where $\tilde{V}$ is the potential of $\mathcal{L}^*$.
\end{thm}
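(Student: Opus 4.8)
The plan is to mirror the structure of the proof of Theorem~\ref{premium1}, replacing the discrete summation over payment dates by an integral in $t$ and the single-period death indicator by its infinitesimal analogue. First I would compute the expected revenue. Exactly as in \eqref{revenue1}, conditioning on $\sigma(X_s:s\le t)$ and using \eqref{a1f} in Assumption~\ref{a3},
\begin{equation*}
\int_0^\infty e^{-rt}\mathbb{E}[v_t(p)]\,dt
= p\int_0^\infty e^{-rt}\sum_{k\in\mathbf{Z}^d}\mathbf{E}^{k/N}\!\left[e^{-\int_0^t V(X_s)\,ds}\right]\mu_N(\{k/N\})N\,dt,
\end{equation*}
since $\{\hat X_t\ne\infty\}=\{\zeta>t\}$. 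The only new ingredient is the continuous-time expenditure. Here I would write $C_\infty=\sum_{j\in\mathcal I}Ae^{-r\zeta^j}$ (all agents eventually die by \eqref{sibou}) and use that, conditionally on the path of $X$, the law of $\zeta$ has the sub-probability density $t\mapsto V(X_t)e^{-\int_0^t V(X_s)\,ds}$. Hence
\begin{equation*}
\mathbb{E}[Ae^{-r\zeta^i}]=A\,\mathbf{E}^{x^i}\!\left[\int_0^\infty e^{-rt}V(X_t)e^{-\int_0^t V(X_s)\,ds}\,dt\right]
=A\int_0^\infty e^{-rt}\mathbf{E}^{x^i}\!\left[V(X_t)e^{-\int_0^t V(X_s)\,ds}\right]dt,
\end{equation*}
the interchange of $\int dt$ and $\mathbf{E}$ being justified by Fubini/Tonelli since the integrand is nonnegative. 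Summing over $i\in\mathcal I$, regrouping by the value $k/N$ of the initial profile exactly as in Theorem~\ref{premium1}, and dividing numerator by denominator gives the stated formula for $p_c(N)$.

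Next I would pass to the limit $N\to\infty$. Each of the two $t$-sections
\begin{equation*}
\sum_{k\in\mathbf{Z}^d}\mathbf{E}^{k/N}\!\left[e^{-\int_0^t V(X_s)\,ds}\right]\mu_N(\{k/N\}),\qquad
\sum_{k\in\mathbf{Z}^d}\mathbf{E}^{k/N}\!\left[V(X_t)e^{-\int_0^t V(X_s)\,ds}\right]\mu_N(\{k/N\})
\end{equation*}
is the $\mu_N$-integral of a function of the starting point; Assumption~\ref{a0} guarantees that $x\mapsto\mathbf{E}^x[\cdot]$ is continuous (it is $\hat T_t$ applied to $1$, resp.\ to $V$, composed with the transition density), and these are bounded in $x$ for fixed $t$, so Assumption~\ref{a1} yields pointwise-in-$t$ convergence to $\int_{\mathbf R^d}\mathbf{E}^x[\cdot]f(x)\,dx$. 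Then dominated convergence in $t$ against $e^{-rt}$ (the inner expectations are uniformly bounded by $1$, resp.\ by $\sup V$ if $V$ is bounded, or one controls $\mathbf E^x[V(X_t)e^{-\int_0^t V}]$ by $\le$ the derivative in $t$ of $-\mathbf E^x[e^{-\int_0^t V}]$ integrated, which is still integrable against $e^{-rt}$) lets one pass the limit through $\int_0^\infty e^{-rt}(\cdots)\,dt$ in both numerator and denominator. Since the denominator limit is strictly positive by \eqref{sibou}, the quotient converges to
\begin{equation*}
p_c(\infty)=\frac{A\displaystyle\int_0^\infty e^{-rt}\int_{\mathbf R^d}\mathbb E^x\!\left[V(X_t)e^{-\int_0^t V(X_s)\,ds}\right]f(x)\,dx\,dt}{\displaystyle\int_0^\infty e^{-rt}\int_{\mathbf R^d}\mathbb E^x\!\left[e^{-\int_0^t V(X_s)\,ds}\right]f(x)\,dx\,dt}.
\end{equation*}

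Finally I would rewrite this in terms of the adjoint process, exactly as at the end of the proofs of Theorems~\ref{T1} and~\ref{premium1}. Writing $\mathbb E^x[g(X_t)e^{-\int_0^t V(X_s)\,ds}]=\int q_V(t,x,y)g(y)\,dy$, inserting $f(x)$ and applying Tonelli to swap the $x$- and $y$-integrations turns $\int(\int q_V(t,x,y)g(y)\,dy)f(x)\,dx$ into $\int g(y)(\int q_V(t,x,y)f(x)\,dx)\,dy$; by the definition of the adjoint semigroup this inner integral is the density at $y$ of $\hat T_t^*$ applied to $f$, i.e.\ $\mathbf{E}^y[f(X^*_t)e^{-\int_0^t\tilde V(X^*_s)\,ds}]$ with $\tilde V$ the potential of $\mathcal L^*$, and with $g=V$ in the numerator we pick up the extra factor $\tilde V(X^*_t)$ inside the expectation. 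This produces \eqref{pxstar2}. I expect the only real point requiring care to be the domination in $t$ needed for the limit interchange in the numerator — one must argue that $\int_0^\infty e^{-rt}\mathbf E^x[V(X_t)e^{-\int_0^t V(X_s)\,ds}]\,dt$ is finite uniformly enough in the approximating parameter; this follows because $\mathbf E^x[V(X_t)e^{-\int_0^t V}]=-\tfrac{d}{dt}\mathbf E^x[e^{-\int_0^t V}]$, so its $t$-integral telescopes to $\mathbf E^x[1-e^{-\int_0^\infty V}]\le 1$ even before the $e^{-rt}$ weight, giving a clean bound. Everything else is a transcription of the discrete-time argument.
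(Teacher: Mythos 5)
Your proposal is correct and follows essentially the same route as the paper: the expenditure identity you derive from the conditional sub-density $V(X_t)e^{-\int_0^t V(X_s)\,ds}$ of $\zeta$ is exactly the content of the paper's Lemma \ref{appendlem}, and the limit passage and adjoint rewriting mirror the proofs of Theorems \ref{premium1} and \ref{T1}. You are in fact more careful than the paper about the dominated-convergence step and the boundedness/continuity needed to invoke Assumption \ref{a1}, which the paper dispatches with a one-line reference to the earlier proof.
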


\begin{proof}
We first obtain the expected return $ R_c (N,p)$.
Calculating $v_t(p)$ like we did in \eqref{revenue1}, we get 
\begin{align}\label{revenue2}
\int_{0}^\infty e^{-rt}\mathbb{E}[v_t(p)]dt=p\int_0^\infty e^{-rt} \sum_{k \in \mathbf{Z}^d }\mathbf{E}^{\frac{k}{N}}[e^{-\int_0^{t} V(X_s)ds}]\mu_N(\{k/N\})N dt,
\end{align}
while the expected expenditure at time 0:,
\begin{align*}
\mathbb{E}[C_t]&= \mathbb{E}\left[\sum_{j \in \mathcal{I}} A e^{-r \zeta^j} 
1_{\{i\in \mathcal{I}: \zeta^i<t \}}(j) \right],
\end{align*}
can be, by Lemma \ref{appendlem} in Appendix \ref{continuous death}, rewritten as
\begin{align*}
A \int_0^t e^{-rs}\sum_{i \in \mathcal{I}} \mathbb{E}[V(X^i_s)e^{-\int^s_0V(X^i_u)du}]ds.
\end{align*}
Then, as we did in \eqref{expenditure1}, we get the following formula:
\begin{align} \label{expenditure2}
\mathbb{E}[C_t]=A \int_0^t e^{-rs} \sum_{k \in \mathbf{Z}^d }  \mathbf{E}^{\frac{k}{N}}[V(X_s)e^{-\int^s_0V(X_u)du}]\mu_N(\{k/N\})N ds.
\end{align}
Using the formulas \eqref{revenue2} and \eqref{expenditure2}, the expected return $ R_c (N,p)$ is obtained as 
\begin{align}
&R_c (N,p) \nonumber \\
=&\int_{0}^\infty e^{-rt}\mathbb{E}[v_t(p)]dt-
%\int_{0}^\inftye^{-rt}
\mathbb{E}[C_\infty] \nonumber\\
=&p\int_0^\infty e^{-rt} \sum_{k \in \mathbf{Z}^d }\mathbf{E}^{\frac{k}{N}}[e^{-\int_0^{t} V(X_s)ds}]\mu_N(\{k/N\})N dt\nonumber \\
&\quad -A\int_0^\infty e^{-rt}\sum_{k \in \mathbf{Z}^d } \mathbf{E}^{\frac{k}{N}}[V(X_t)e^{-\int^t_0V(X_u)du}]\mu_N(\{k/N\})N dt. \label{expect return2}
\end{align}
Thus, 
%by using the principle of equivalence with respect to \eqref{expect return2}, 
the premium $ p(N) $ is given by
\begin{align}
p_c (N)=\frac{\displaystyle A \int_0^\infty e^{-rt}\sum_{k \in \mathbf{Z}^d }  \mathbf{E}^{\frac{k}{N}}[V(X_t)e^{-\int^t_0V(X_u)du}]\mu_N(\{k/N\})dt}
{\displaystyle \int_0^\infty e^{-rt} \sum_{k \in \mathbf{Z}^d }\mathbf{E}^{\frac{k}{N}}[e^{-\int_0^{t} V(X_s)ds}]\mu_N(\{k/N\})dt}. \label{p2}
\end{align}
Then, by the same 
procedure as we used in the proof of Theorem \ref{premium1}, we get \eqref{pxstar2}.
\if2
Taking the limit as $N$ tends to $\infty$ with respect to \eqref{p2}, we get the following formula since each term converges by Assumption \ref{a1}
\begin{align*}
p(\infty)=\frac{\displaystyle A\int_{0}^\infty e^{-rt}\int_{-\infty}^{\infty}\mathbb{E}^{x}[V(X_t)e^{-\int_0^{t} V(X_s)ds}]f(x)dxdt}
{\displaystyle \int_{0}^\infty e^{-rt} \int_{-\infty}^{\infty}\mathbb{E}^{x}[e^{-\int_0^{t} V(X_s)ds}]f(x)dxdt}.
\end{align*}
Moreover, this formula can be rewritten by using the adjoint process as follows,
\begin{align*}
p(\infty)=\frac{\displaystyle A\int_{0}^\infty e^{-rt}\int_{-\infty}^{\infty}\mathbb{E}^{y}[f(X^*_t)V(X^*_t)e^{-\int_0^{t} V(X^*_s)ds}]dydt}
{\displaystyle \int_{0}^\infty e^{-rt} \int_{-\infty}^{\infty}\mathbb{E}^{y}[f(X^*_t)e^{-\int_0^{t} V(X^*_s)ds}]dydt}.
\end{align*}
\fi
\end{proof}

%\begin{rem}Now we see that  the premium \eqref{pxstar2} can  be expressed with the resolvent operator $ G_r := \int_0^\infty e^{-rt} T_t \,dt $, and therefore eigenfunctions of $ \mathcal{L} $.\end{rem}

\section{Model with Surrender Risk}\label{seckaiyaku}
\subsection{Mortality Model with Surrender Risk}
In this section 
we consider a model 
where insurers can surrender their policy 
by 
balancing their personal conditions with
the premium. 

\if2
Let $Z(p)$ be a killed process 
obtained from the diffusion process %\eqref{Markov0} 
$ X $ of \eqref{Markov0}
with Assumption \ref{a0}
and the random time $ \zeta $ with Assumption \eqref{a3}, 
defined 
with a cancellation time $ \xi (p) $, as follows,
\begin{equation}
Z_t (p) = X_t 1_{ \{ \xi(p) > t \} } 
1_{\{ \zeta > t \} } + \infty 1_{ \{ \zeta \leq t \} } 
- \infty 1_{ \{ \xi(p) \leq t \} } 1_{ \{ \zeta > t \}}.
\end{equation}
\fi
In addition to $ \zeta $, we introduce a new random time $ \xi (p) $, which is dependent on a parameter $ p $, 
satisfying the following assumption.
\begin{ass}\label{a4}
We assume that 
\begin{enumerate}
\item \begin{equation*}
\mathbb{P} ( \xi(p) > t \,| \,\sigma (X_s : s \leq t) ) = 
e^{- \int_0^t D(p,X_s)\,ds }
\end{equation*}
with a positive measurable function $ D 
:[0,\infty) \times \mathbf{R}^d \ni (p,x)  \mapsto D(p,x) \in \mathbf{R}_{>0} $,
which is increasing in $ p $
and decreasing in $ x $. 
\item The random times $ \zeta $ and $ \xi(p) $ are conditionally independent in the following sense:
\begin{equation}\label{conditional expectation}
\begin{split}
&\mathbb{P} ( \xi(p) > t , \zeta > t \,| \,\sigma (X_s : s \leq t) ) \\
=& 
\mathbb{P} ( \xi(p) > t \,| \,\sigma (X_s : s \leq t) )
\mathbb{P} ( \zeta > t \,| \,\sigma (X_s : s \leq t) ) \\
=& e^{-\int_0^t V(X_s)\,ds - \int_0^t D(X_s,p)\,ds }.
\end{split}
\end{equation}
\end{enumerate}
\end{ass}

\subsection{Continuous-time Level-Premium Insurance Model with Surrender Risks}
The revenue of the insurance company during $ [0,t] $
is given by
\begin{equation*}
    \int_0^t e^{-rs}  v_s (p)\,ds,
\end{equation*}
where 
\begin{equation}
v_t(p) = p \cdot \sharp \{ i \in \mathcal{I} : \zeta^i> t,\xi^i(p)> t \}, \label{income}
\end{equation}
while the expenditure of the insurance company during $ [0,t] $
is given by
\begin{equation}
\begin{split}
%&c_0 = 0, \\
& C_t = A \sum_{j \in \mathcal{I}} e^{-r \zeta^j} 1_{\{ i \in \mathcal{I} : ~ \zeta^i \leq t, ~\xi^i(p)> t\}}(j).\label{spending}
\end{split}
\end{equation}
The expected return at time $0$
is the same as the one in the previous section, namely, 
\begin{equation*}
R_s (N,p) =\int_0^\infty e^{-rt}\mathbb{E}[v_t(p)]dt
- \mathbb{E}[C_\infty],
\end{equation*}
where the subscript $ s $ 
is put to indicate that it is the one with surrender risk. 
It should be noted that the expected return is no longer a linear function in 
$ p $, and thus 
we may not have uniqueness of the solution 
$ p $ for $ R_c (N,p) = 0 $.

\begin{thm}\label{Th3.2}
We have that 
\begin{equation}\label{expect return3}
\begin{split}
&R_s (N,p)\\
=&\int_0^\infty e^{-rt}\mathbb{E}[v_t(p)]dt
- \mathbb{E}[C_\infty ]\\
=&\int_0^\infty e^{-rt} \sum_{k \in \mathbf{Z}^d }
\mathbf{E}^{\frac{k}{N}}[
(p-A V(X_t))e^{-\int_0^t V(X_s)+D(X_s,p)ds}]\mu_N(\{k/N\})N dt. \\
%& \quad -A\int_0^\infty e^{-rt} \sum_{k \in \mathbf{Z}^d }\mathbf{E}^{\frac{k}{N}}[V(X_t)e^{-\int_0^t V(X_s)+D(X_s,p)ds}]\mu_N(\{k/N\})N dt. 
\end{split}
\end{equation}
\end{thm}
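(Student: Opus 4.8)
The plan is to follow the two-step pattern of the proofs of Theorems \ref{premium1} and \ref{Th3.5}: compute the expected revenue $\int_0^\infty e^{-rt}\mathbb{E}[v_t(p)]\,dt$ and the expected expenditure $\mathbb{E}[C_\infty]$ separately, in each case first conditioning on the profile paths to produce a Feynman--Kac type integrand and then rewriting the sum over the insured as a $\mu_N$-integral over $N^{-1}\mathbf{Z}^d$ exactly as in \eqref{revenue1}--\eqref{expenditure2}. Since $v_t(p)\le pN$ and $C_\infty\le AN$ pathwise, both quantities are finite, so at the end the two single integrals may be subtracted and merged without any convergence subtlety.

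For the revenue I would write $\mathbb{E}[v_t(p)]=p\sum_{i\in\mathcal{I}}\mathbb{P}(\zeta^i>t,\,\xi^i(p)>t)$ and then, by the tower property together with the conditional independence \eqref{conditional expectation} in Assumption \ref{a4}(2),
\begin{equation*}
\mathbb{P}(\zeta^i>t,\,\xi^i(p)>t)=\mathbb{E}\big[\mathbb{P}(\zeta^i>t,\,\xi^i(p)>t\mid\sigma(X^i_s:s\le t))\big]=\mathbf{E}^{x^i}\big[e^{-\int_0^t V(X_s)+D(X_s,p)\,ds}\big].
\end{equation*}
Grouping the insured with a common initial profile as in \eqref{revenue1}--\eqref{revenue2}, and exchanging $\sum$ with $\int$ because the integrand is nonnegative, this gives
\begin{equation*}
\int_0^\infty e^{-rt}\mathbb{E}[v_t(p)]\,dt=p\int_0^\infty e^{-rt}\sum_{k\in\mathbf{Z}^d}\mathbf{E}^{\frac{k}{N}}\big[e^{-\int_0^t V(X_s)+D(X_s,p)\,ds}\big]\mu_N(\{k/N\})N\,dt.
\end{equation*}

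For the expenditure, letting $t\to\infty$ in \eqref{spending}, a benefit $Ae^{-r\zeta^j}$ is paid for each insured $j$ who dies while still in force, so $C_\infty=A\sum_{j\in\mathcal{I}}e^{-r\zeta^j}\,1_{\{\zeta^j<\xi^j(p)\}}$. Fixing $j$ and conditioning on the path of $X^j$, the key point is that by \eqref{conditional expectation} the times $\zeta^j$ and $\xi^j(p)$ are conditionally independent with hazards $s\mapsto V(X^j_s)$ and $s\mapsto D(X^j_s,p)$, so the conditional sub-density of ``death at time $t$ with the policy still in force'' equals $V(X^j_t)\,e^{-\int_0^t V(X^j_s)+D(X^j_s,p)\,ds}$; this is the two-killing-time analogue of the lemma invoked for Theorem \ref{Th3.5} (Lemma \ref{appendlem}), obtained by the same argument applied to $t\mapsto e^{-\int_0^t V(X^j_s)+D(X^j_s,p)\,ds}$ and splitting its decrement according to which of $\zeta^j,\xi^j(p)$ causes it. Taking conditional expectation of $e^{-r\zeta^j}1_{\{\zeta^j<\xi^j(p)\}}$, then expectation, then Fubini (nonnegativity), and finally reorganizing by $\mu_N$ as in \eqref{expenditure2}, I obtain
\begin{equation*}
\mathbb{E}[C_\infty]=A\int_0^\infty e^{-rt}\sum_{k\in\mathbf{Z}^d}\mathbf{E}^{\frac{k}{N}}\big[V(X_t)\,e^{-\int_0^t V(X_s)+D(X_s,p)\,ds}\big]\mu_N(\{k/N\})N\,dt.
\end{equation*}
Subtracting this from the revenue expression and merging the two integrals (pulling the constants $p$ and $A$ inside the expectation) then yields \eqref{expect return3}.

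The step I expect to be the main obstacle is precisely the identification of that conditional sub-density in the expenditure term: one must verify that the competing surrender mechanism contributes the extra factor $e^{-\int_0^t D(X_s,p)\,ds}$ inside the ``death'' term, and this is the only place where the conditional independence hypothesis \eqref{conditional expectation} is genuinely used; everything else is a mechanical repetition of the computations \eqref{revenue1}--\eqref{expenditure2}. Note that the monotonicity of $D$ in $p$ and $x$ plays no role in deriving \eqref{expect return3}; it is relevant only for the subsequent study of the equation $R_s(N,p)=0$.
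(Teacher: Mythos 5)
Your proposal is correct and follows essentially the same route as the paper: revenue via the tower property and the conditional independence \eqref{conditional expectation}, expenditure via a two-hazard analogue of Lemma \ref{appendlem} producing the sub-density $V(X_t)e^{-\int_0^t V(X_s)+D(X_s,p)\,ds}$, then regrouping the insured by initial profile through $\mu_N$. The only (cosmetic) difference is that you read $C_\infty$ directly as $A\sum_j e^{-r\zeta^j}1_{\{\zeta^j<\xi^j(p)\}}$ rather than as the $t\to\infty$ limit of \eqref{spending}; this is in fact the cleaner reading, since the literal indicator $1_{\{\zeta^j\le t,\ \xi^j(p)>t\}}$ is not monotone in $t$, and it is your interpretation that the paper's final expression \eqref{expenditure3} actually computes.
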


\begin{proof}
The proof is almost the same
as for Theorem \ref{Th3.5}. First, 
\begin{align*}
&\int_0^\infty e^{-rt}\mathbb{E}[v_t(p)]dt \\
=&\int_0^\infty e^{-rt}
\mathbb{E} 
\left[%\sum_{i \in \mathcal{I}} 
p \cdot \sharp \{ i \in \mathcal{I} : \zeta^i> t,\xi^i(p)> t \} \right]dt \\
=&p\int_0^\infty e^{-rt} 
\sum_{i \in \mathcal{I}}\mathbb{P}(\zeta^i>t,\xi^i(p)>t)dt \\
=&p\int_0^\infty e^{-rt} \sum_{k \in \mathbf{Z}^d }\mathbf{P}^{\frac{k}{N}}(\zeta>t,\xi(p)>t)\mu_N(\{k/N\})Ndt \\
=&p\int_0^\infty e^{-rt} 
\sum_{k \in \mathbf{Z}^d }
\mathbf{E}^{\frac{k}{N}}[\mathbb{E}[1_{\{\zeta>t,\xi(p)>t\}}|\sigma(X_s:s\leq t)]]\mu_N(\{k/N\})Ndt 
\end{align*}
This formula can be rewritten, by \eqref{conditional expectation}, as 
\begin{align}
=&p\int_0^\infty e^{-rt} \sum_{k \in \mathbf{Z}^d }\mathbb{E}^{\frac{k}{N}}[e^{-\int_0^t (V(X_s)+D(X_s,p))\,ds}]\mu_N(\{k/N\})Ndt \label{revenue3}.
\end{align}
Next, the expected expenditure at time 0, 
\begin{align*}
%&\int_0^\infty \mathbb{E}[c_t]dt \nonumber \\
\mathbb{E} [C_\infty] 
&=\lim_{t \to \infty}\mathbb{E}\left[A 
\sum_{j \in \mathcal{I}} e^{-r \zeta^j} 
1_{\{ i \in \mathcal{I} : ~ \zeta^i<t, ~\xi^i(p)> t\}}(j)\right],
\end{align*}
%\textcolor{red}{Here}
is
obtained as 
\begin{align*}
&A\int_0^t e^{-rs} 
\sum_{i \in \mathcal{I}}\mathbb{P}(\zeta^i \in ds, 
\xi^i(p)>t )  \\
& =A\int_0^t
e^{-rs} \sum_{k \in \mathbf{Z}^d }
\mathbf{P}^{\frac{k}{N}}
(\zeta \in ds
,\xi(p)>t)
\mu_N(\{k/N\})N 
%dt \nonumber 
\\
&=A\int_0^t e^{-rs}
\sum_{k \in \mathbf{Z}^d }
\mathbf{E}^{\frac{k}{N}}
[\mathbf{E}^{\frac{k}{N}}[1_{\{\zeta \in ds,\ 
\xi(p)>t \}}|\sigma(X_u:u\leq t)]]\mu_N(\{k/N\})N 
\nonumber
\end{align*}
This expression can be shown, by \eqref{conditional expectation}
and Lemma \ref{appendlem} in Appendix \ref{continuous death}, to be equal to 
\begin{align}
%=&A\int_0^t e^{-rs} \sum_{k \in \mathbf{Z}^d }\mathbf{E}^{\frac{k}{N}} [\mathbf{E}^{\frac{k}{N}}[1_{\{\zeta  <s \}}%|\sigma(X_s):s\leq t\}} 1_{\{\xi(p)>t\}} |\sigma(X_u:u\leq t)]]\mu_N(\{k/N\})N ds.\nonumber \\
&A\int_0^\infty e^{-rt} \sum_{k \in \mathbf{Z}^d }\mathbf{E}^{\frac{k}{N}}[V(X_t)e^{-\int_0^t ( V(X_s)+D(X_s,p)) \,ds}]\mu_N(\{k/N\})N dt.
\label{expenditure3}
\end{align}
Taking the limit of \eqref{expenditure3} as $t\to \infty$, and 
using the formula \eqref{revenue3}, 
we obtain \eqref{expect return3}. 
\end{proof}

We can 
define the virtual average expected return 
as 
%let $ N \to \infty $, 
%Taking the limit as $N$ tends to $\infty$ with respect to \eqref{p3}, we get the following formula since each term converges by using Assumption \eqref{a1}. $ p(\infty) $ is
\begin{align}\label{core}
\begin{aligned}
&\mathrm{VAR}_s (p) 
:= \lim_{N \to \infty} \frac{1}{N} R_s (N, p) \\
&=\int_0^\infty e^{-rt} \int_{\mathbf{R}^d}\mathbf{E}^{x}[
(p - AV(X_t))e^{-\int_0^t V(X_s)+D(X_s,p)ds}]f(x)dxdt
\\
&= \int_{0}^\infty e^{-rt} \int_{\mathbf{R}^d} \mathbf{E}^{y}[(p- AV(X^*_t)) f(X^*_t)e^{-\int_0^{t} V(X^*_s)+D(X^*_s,p)ds}]dydt.
\end{aligned}
\end{align}
The zero(s) of $ \mathrm{VAR} (p) $ can be a good approximation 
of the zero(s) of 
$ R_s (N,p) $.

\begin{rem}
Since 
$ R_s (N, 0) < 0 $ and $ R_s $ is continuous in $ p $, 
the solution $ p^* $ 
of $ R_S (N, p^*) = 0 $
exists if and only if 
$ R_s (N, p) >0 $ for some $ p $. 
An evident sufficient condition that ensures 
the existence of the solution 
is that 
$ V $ and $ D $ are bounded,
since this implies 
$ \lim_{p \to \infty } R_s (N, p)= +\infty$. 
\end{rem}

\begin{rem}\label{comonCN}
In the paper by O. Le Courtois and H. Nakagawa \cite{LeCNak}, 
the number of surrenders is modeled by 
a counting process, 
and the expected number of the remaining participants at $ t $ is expressed by 
its intensity process, 
while in our framework 
it is given by 
$ \int f(x) \mathbf{E} [ e^{-\int_0^t D(X_s,p)ds} ] dx $ using the ``state process" $ X $. 
\end{rem}

\if0
\begin{thm}
If $V$ is bounded and 
\begin{align}
    1-p\frac{\partial}{\partial p}\int _0^t D(X_s,p)ds+AV(X_t)\frac{\partial}{\partial p}\int _0^t D(X_s,p)ds>0 \label{assumption},
\end{align}
$p(\infty)$  is unique 
and 
strictly positive.
Moreover, the solution is \textcolor{red}{...}
\end{thm}

\begin{proof}
\begin{align*}
&\frac{\partial}{\partial p}R (N,p;\mathbf{R}_{\geq 0})\\
&= \int_0^\infty e^{-rt} \sum_{k \in \mathbf{Z}^d }\mathbb{E}^{\frac{k}{N}}[e^{-\int_0^t V(X_s)+D(X_s,p)ds}]\mu_N(\{k/N\})dt\\
& \quad +p\int_0^\infty e^{-rt} \sum_{k \in \mathbf{Z}^d }\mathbb{E}^{\frac{k}{N}}\left [\frac{\partial}{\partial p} e^{-\int_0^t V(X_s)+D(X_s,p)ds} \right]\mu_N(\{k/N\})dt\\
& \quad -A\int_0^\infty e^{-rt} \sum_{k \in \mathbf{Z}^d }\mathbb{E}^{\frac{k}{N}}\left[\frac{\partial}{\partial p} V(X_t)e^{-\int_0^t V(X_s)+D(X_s,p)ds}\right]\mu_N(\{k/N\})dt\\
&= \int_0^\infty e^{-rt} \sum_{k \in \mathbf{Z}^d }\mathbb{E}^{\frac{k}{N}}[e^{-\int_0^t V(X_s)+D(X_s,p)ds}]\mu_N(\{k/N\})dt\\
& \quad -p\int_0^\infty e^{-rt} \sum_{k \in \mathbf{Z}^d }\mathbb{E}^{\frac{k}{N}}\left [\frac{\partial}{\partial p}\int _0^t D(X_s,p)ds e^{-\int_0^t V(X_s)+D(X_s,p)ds} \right]\mu_N(\{k/N\})dt\\
& \quad +A\int_0^\infty e^{-rt} \sum_{k \in \mathbf{Z}^d }\mathbb{E}^{\frac{k}{N}}\left[\frac{\partial}{\partial p}\int _0^t D(X_s,p)ds V(X_t)e^{-\int_0^t V(X_s)+D(X_s,p)ds}\right]\mu_N(\{k/N\})dt\\
&=\int_0^\infty e^{-rt}\sum_{k \in \mathbf{Z}^d }
\mathbb{E}^{\frac{k}{N}}\left[\left\{1-p\frac{\partial}{\partial p}\int _0^t D(X_s,p)ds+AV(X_t)\frac{\partial}{\partial p}\int _0^t D(X_s,p)ds  \right \}\right.\\
&\qquad\left.e^{-\int_0^t V(X_s)+D(X_s,p)ds}\right]
\mu_N(\{k/N\})dt
\end{align*}
$R (N,p;\mathbf{R}_{\geq 0})$ is monotonic increasing with respect to $p$,
because $\frac{\partial}{\partial p}R (N,p;\mathbf{R}_{\geq 0})>0$ since \eqref{assumption}.

We consider $p = 0$ firstly,
\begin{equation}
\begin{split}
&R (N,p;\mathbf{R}_{\geq 0}) \\
=&\int_0^\infty e^{-rt}\mathbb{E}[v_t(p)]dt
- \mathbb{E}[C_\infty ]\\
=& -A\int_0^\infty e^{-rt} \sum_{k \in \mathbf{Z}^d }\mathbb{E}^{\frac{k}{N}}[V(X_t)e^{-\int_0^t V(X_s)+D(X_s,p)ds}]\mu_N(\{k/N\})dt. 
\end{split}
\end{equation}
We have
\begin{align*}
&A\int_0^\infty e^{-rt} \sum_{k \in \mathbf{Z}^d }\mathbb{E}^{\frac{k}{N}}[V(X_t)e^{-\int_0^t V(X_s)+D(X_s,p)ds}]\mu_N(\{k/N\})dt>0
\end{align*}
by \eqref{spending}.
Thus, 
\begin{equation*}
  R (N,p;\mathbf{R}_{\geq 0}).
\end{equation*}
Next, We consider $p > 0$
\item Assume that $V(X_t)$ is bounded, namely there exist $C \in \mathbf{R}$ which
\begin{align*}
    V(X_t)\leq C
\end{align*}
\begin{equation*}
\begin{split}
&R (N,p;\mathbf{R}_{\geq 0}) \\
&=p\int_0^\infty e^{-rt} \sum_{k \in \mathbf{Z}^d }\mathbb{E}^{\frac{k}{N}}[e^{-\int_0^t V(X_s)+D(X_s,p)ds}]\mu_N(\{k/N\})dt\\
& \quad -A\int_0^\infty e^{-rt} \sum_{k \in \mathbf{Z}^d }\mathbb{E}^{\frac{k}{N}}[V(X_t)e^{-\int_0^t V(X_s)+D(X_s,p)ds}]\mu_N(\{k/N\})dt\\
&= \int_0^\infty e^{-rt} \sum_{k \in \mathbf{Z}^d }\mathbb{E}^{\frac{k}{N}}[\{p-AV(X_t)\}e^{-\int_0^t V(X_s)+D(X_s,p)ds}]\mu_N(\{k/N\})dt\\
&> \int_0^\infty e^{-rt} \sum_{k \in \mathbf{Z}^d }\mathbb{E}^{\frac{k}{N}}[(p-AC)e^{-\int_0^t C+D(X_s,p)ds}]\mu_N(\{k/N\})dt\\
\end{split}
\end{equation*}
We have $R (N,p;\mathbf{R}_{\geq 0})>0$ with respect to $p$ which $p \geq AC$ .
We could show that 
$p(N)$  is unique 
and 
strictly positive.
\end{proof}

\begin{comment}
\item Assume that $D(X_t)$ is bounded, namely there exist $D \in \mathbf{R}$ which
\begin{align*}
    D(X_t,p)\leq D
\end{align*}
\begin{equation*}
\begin{split}
&R(N,\infty;p) \\
&=p\int_0^\infty e^{-rt} \sum_{k \in \mathbf{Z}^d }\mathbb{E}^{\frac{k}{N}}[e^{-\int_0^t V(X_s)+D(X_s,p)ds}]\mu_N(\{k/N\})dt\\
& \quad -A\int_0^\infty e^{-rt} \sum_{k \in \mathbf{Z}^d }\mathbb{E}^{\frac{k}{N}}[V(X_t)e^{-\int_0^t V(X_s)+D(X_s,p)ds}]\mu_N(\{k/N\})dt\\
&= \int_0^\infty e^{-rt} \sum_{k \in \mathbf{Z}^d }\mathbb{E}^{\frac{k}{N}}[\{p-AV(X_t)\}e^{-\int_0^t V(X_s)+D(X_s,p)ds}]\mu_N(\{k/N\})dt\\
&> \int_0^\infty e^{-rt} \sum_{k \in \mathbf{Z}^d }\mathbb{E}^{\frac{k}{N}}[\{p-AV(X_t)\}e^{-\int_0^t V(X_s)+Dds}]\mu_N(\{k/N\})dt\\
\end{split}
\end{equation*}
We have $R(N,\infty;p)>0 $
\end{comment}
\fi

\section{A Model by 
Brownian Motion with Constant Drift}
\label{BMM}

\subsection{Description of the model and an 
expression of the expected return}
In this section, 
we specifically assume that the personal condition process $ X_t $ is one dimensional Brownian motion with drift; that is,
\begin{equation*}
 X_t = a W_t + bt,
\end{equation*}
where $W_t$ is standard Brownian motion, $a > 0,\ b \in \mathbf{R}$,
so that 
\begin{equation*}
P^x (X_t \in A) = \int_A \frac{1}{\sqrt{2 \pi a^2 t}} 
e^{-\frac{(y-x-bt)^2}{2a^2 t}} \,dy ,
\quad A \in
\mathcal{B}(\mathbf{R}).
\end{equation*}
Clearly, 
Assumption \ref{a0} is satisfied. 
%We further assume that the initial condition distribution is approximated by the normal distribution whose mean and variance are $ \hat{b} $ and $\hat{a}^2 $, respectively; the limit density $ f $ in Assumption \ref{a1} is given by 
\if0
\begin{equation}\label{defoff}
f(x) = \frac{1}{\sqrt{2 \pi \hat{a}^2}} e^{-\frac{(x-\hat{b})^2}{2\hat{a}^2}},\ 
 \hat{a} > 0,\ \hat{b} \in \mathbf{R}.
\end{equation}
\fi

Moreover, we assume that 
the killing rate functions 
$ V(y) $ and $D(y,p)$ 
are step functions; that is, 
\begin{equation*}
\begin{split}
 & V(y) = \sum_{i=1}^{M} \lambda_i 1_{ (y_{i-1},  y_i]  }(y) , \\
 & D(y,p) = \sum_{i=1}^{M} \mu_i(p) 1_{ ( y_{i-1},   y_i ]}(y) , \\
 & ( M \in \mathbf{N} , \lambda_i, \mu_i(p) \in \mathbf{R},
              -\infty = y_0 < y_1 < ... < y_M = \infty ).
\end{split}
\end{equation*}
%We call it constant drift Brownian motion model, CDB model for short. 
The model is so designed 
that we can fit/calibrate any data
to a certain extent 
provided that the personal condition is expressed by a real number.

In this case, 
the expected return 
$ R_s (N, p) $ can be 
calculated,
but only if we 
consider the inversion of a large scale 
matrix to be tractable.
Below we first show 
how the expected return is calculated out of an inversion 
of a large matrix. 
Then, we propose a numerical scheme 
to reduce the burden. 

We define $ z_V $ and $z_1$ as follows: for $ y \in \mathbf{R}$, 
\begin{equation*}
\begin{split}
&z_V(y) := \int_0^\infty e^{-r t} E^{y}[V(X_t)e^{- \int_0^{t} (V(X_s) + D(X_s,p))\,ds } ] \,dt, \\
\end{split}
\end{equation*}
and 
\begin{equation*}
\begin{split}
&z_1(y) := \int_0^\infty e^{-r t} E^{y}[e^{- \int_0^{t}(V(X_s) + D(X_s,p))\,ds }] \,dt.
\end{split}
\end{equation*}
Then we can express the expected return as 
\begin{equation*}
    \begin{split}
        &R_s (N,p)
        = \sum_{k=-\infty}^\infty 
        N \mu_N \left(\frac{k}{N} \right) 
        \left(p u_1 \left(\frac{k}{N} \right)
        -A u_V \left(\frac{k}{N} \right) \right).
    \end{split}
\end{equation*}

\begin{lem}
We have that 
\begin{equation*}
\begin{split}
z_V(y) = \sum_{i=1}^{M}
        &  \Bigl\{ C_{i,+}e^{\alpha_{i,+} y}
        %   \frac{-b+\sqrt{b^2+2a^2(\lambda_i+\mu_i(p)+r)}}{a^2}y} 
         + C_{i,-}e^{\alpha_{i,-}y} %-b-\sqrt{b^2+2a^2(\lambda_i+\mu_i(p)+r)}}{a^2}y}  
      + \gamma_i  
      %\frac{\lambda_i}{2(\lambda_i+\mu_i(p)+r)} 
      \Bigr\} \ 1_ { \{ y_{i-1} < y <  y_i \} },
      \end{split}
\end{equation*}
and
\begin{equation*}
\begin{split}
z_1(y) = \sum_{i=1}^{M}
        &  \Bigl\{ \widetilde{C}_{i,+}e^{\alpha_{i,+} y}
         + \widetilde{C}_{i,-}e^{\alpha_{i,-}y} +
         \widetilde{\gamma_i} %\frac{1}{2(\lambda_i+\mu_i(p)+r)} 
         \Bigr\}
          \ 1_{ \{ y_{i-1} < y < y_i \} },
\end{split}
\end{equation*}
where, for $i=1, \cdots, M $, 
\begin{equation*}
    \alpha_{i, \pm } = \frac{-b \pm \sqrt{b^2+2a^2(\lambda_i+\mu_i(p)+r)}}{a^2}, 
\end{equation*}
\begin{equation*}
    \gamma_i = \frac{\lambda_i}{2(\lambda_i+\mu_i(p)+r)} = \lambda_i \tilde{\gamma_i}.
\end{equation*}
The constants $ C_{i,\pm},\widetilde{C}_{i,\pm} \in \mathbf{R} $ are given in the following way: 
%Since we know that $ z_V $ and $ z_1 $ are bounded functions, we have that
$ C_{1,-} = C_{M,+} = \tilde{C}_{1,-} = \tilde{C}_{M,+} = 0 $, 
and 
\begin{equation*}
    {}^t \! \vec{C} := (C_{2,-}, \cdots, C_{M,-}, C_{1,+}, \cdots, C_{M-1,+}) \in \mathbf{R}^{2M-2}
\end{equation*}
and 
\begin{equation*}
    {}^t \! \vec{\widetilde{C}} := (\widetilde{C}_{2,-}, \cdots, \widetilde{C}_{M,-}, \widetilde{C}_{1,+}, \cdots, \widetilde{C}_{M-1,+}) \in \mathbf{R}^{2M-2}
\end{equation*}
are given as the unique solutions to
the following equations:
\begin{equation}\label{LM1}
    \begin{split}
   \begin{pmatrix}
       -e^{\alpha_-^+ \mathbf{y}}
       + e^{-\alpha_-^- \mathbf{y}} \, {}^t \! J & e^{\alpha_+^- \mathbf{y}}  - e^{\alpha_+^+ \mathbf{y}}   J \\
       (-e^{\alpha_-^+ \mathbf{y}}  +  e^{-\alpha_-^- \mathbf{y}} \,{}^t \! J )\alpha^+_{-}
       & (e^{\alpha_+^- \mathbf{y}}  -  e^{\alpha_+^+ \mathbf{y}}J  ) \alpha^-_+ 
       \end{pmatrix}
        \vec{C} = \vec{\gamma}, 
    \end{split}
\end{equation}
and 
\begin{equation}\label{LMV}
    \begin{split}
       \begin{pmatrix}
       -e^{\alpha_-^+ \mathbf{y}}
       + e^{-\alpha_-^- \mathbf{y}} \,{}^t \!\! J & e^{\alpha_+^- \mathbf{y}}  -   e^{\alpha_+^+ \mathbf{y}} J \\
       (-e^{\alpha_-^+ \mathbf{y}}  +  e^{-\alpha_-^- \mathbf{y}} \, {}^t \!\! J )\alpha^+_{-}
       & (e^{\alpha_+^- \mathbf{y}}  -   e^{\alpha_+^+ \mathbf{y}}J) \alpha^-_+ 
       \end{pmatrix}
        \vec{\widetilde{C}} = \vec{\widetilde{\gamma}}, 
    \end{split}
\end{equation}
where
\begin{equation*}
    J = \begin{pmatrix}
    0 & 1 & 0 &\cdots & 0  \\
    \vdots  & \ddots & \ddots & \ddots & 
    \vdots\\
    0 & \cdots & 0 & 1 & 0 \\
    0 & \cdots & 0& 0 & 1 \\
    0 & \cdots& 0& 0  & 0   
    \end{pmatrix} =
   \begin{pmatrix}
   0_{\mathbf{R}^{M-2}} & I_{\mathbf{R}^{M-2}\otimes \mathbf{R}^{M-2}} \\
   0 &  ^{t} 0_{\mathbf{R}^{M-2}}
   \end{pmatrix}
    \in \mathbf{R}^{M-1} \otimes \mathbf{R}^{M-1}, 
\end{equation*}
\begin{equation*}
    \mathbf{y}:=\mathrm{diag} ( y_1, \cdots, y_{M-1})   \in \mathbf{R}^{M-1}\otimes\mathbf{R}^{M-1},
\end{equation*}
\begin{equation*}
\alpha_{\pm}^{\pm} := \mathrm{diag} ( \alpha_{\frac{3\pm1}{2},\pm}, \cdots,  \alpha_{\frac{(2M-1)\pm 1}{2}, \pm} ) 
    \in \mathbf{R}^{M-1}\otimes\mathbf{R}^{M-1},
\end{equation*}
\begin{equation*}
   {}^t \!\vec{\gamma} := (\gamma_2-\gamma_1, \cdots, \gamma_M -\gamma_{M-1}), 
   %0, \cdots, 0) 
   \in \mathbf{R}^{M-1}, 
\end{equation*}
and 
\begin{equation*}
   {}^t \!\vec{\widetilde{\gamma}} := (\widetilde{\gamma}_2-\widetilde{\gamma}_1, \cdots, \widetilde{\gamma}_M -\widetilde{\gamma}_{M-1}), 
   %0, \cdots, 0) 
   \in \mathbf{R}^{M-1}.
\end{equation*}
\end{lem}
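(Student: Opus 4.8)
The plan is to recognize $z_1$ and $z_V$ as $r$-resolvents of a Feynman--Kac semigroup, hence as solutions of a second-order linear ODE; to solve that ODE explicitly on each interval of constancy of the step functions; and then to fix the free constants by the growth condition at $\pm\infty$ together with $C^1$-matching at the breakpoints $y_1,\dots,y_{M-1}$, which reassembles the matching conditions into exactly the systems \eqref{LM1} and \eqref{LMV}.

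Write $\Phi(y):=V(y)+D(y,p)$, a bounded step function with $\Phi\equiv\lambda_i+\mu_i(p)$ on $(y_{i-1},y_i)$, and recall that the infinitesimal generator of $X_t=aW_t+bt$ is $\mathcal L=\tfrac{a^2}{2}\partial_y^2+b\,\partial_y$. Since $r>0$ and $\lambda_i+\mu_i(p)+r>0$ for every $i$, the Feynman--Kac semigroup $f\mapsto E^{\cdot}[f(X_t)e^{-\int_0^t \Phi(X_s)\,ds}]$ is a strongly continuous contraction semigroup, so its $r$-resolvent is well defined and $z_1,z_V$ are bounded; by the Feynman--Kac formula together with interior elliptic regularity they are $C^1$ on $\mathbf R$, piecewise $C^2$, and solve $\mathcal L z_1-(\Phi+r)z_1=-1$ and $\mathcal L z_V-(\Phi+r)z_V=-V$. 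On each $(y_{i-1},y_i)$ the coefficients are constant, so the characteristic equation $\tfrac{a^2}{2}\alpha^2+b\alpha-(\lambda_i+\mu_i(p)+r)=0$ has the two real roots $\alpha_{i,\pm}$ of the stated form --- real and of opposite sign, since the constant term is negative --- plus a constant particular solution, $\gamma_i$ for $z_V$ and $\widetilde\gamma_i$ for $z_1$; this yields the claimed representation of $z_V$ and $z_1$ on each interval. On the two unbounded intervals, boundedness forces the exponentially growing mode to vanish: from $\alpha_{1,-}<0<\alpha_{1,+}$ and $\alpha_{M,-}<0<\alpha_{M,+}$ one gets $C_{1,-}=C_{M,+}=\widetilde C_{1,-}=\widetilde C_{M,+}=0$, leaving the $2M-2$ unknowns assembled in $\vec C$ (resp.\ $\vec{\widetilde C}$).

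It remains to impose continuity of $z_V$ and $z_V'$ at $y_1,\dots,y_{M-1}$, which gives $2(M-1)=2M-2$ linear equations: the value-matching equation at $y_i$ reads $C_{i,+}e^{\alpha_{i,+}y_i}+C_{i,-}e^{\alpha_{i,-}y_i}-C_{i+1,+}e^{\alpha_{i+1,+}y_i}-C_{i+1,-}e^{\alpha_{i+1,-}y_i}=\gamma_{i+1}-\gamma_i$, and the derivative-matching equation is the same with each exponential weighted by its exponent $\alpha$ and with right-hand side $0$. Stacking the $C_{i,-}$ into one block of length $M-1$ and the $C_{i,+}$ into the other, the index shift $i\leftrightarrow i+1$ becomes multiplication by the shift matrix $J$ and its transpose, the evaluation points enter through $\mathbf y$, and the exponents through the diagonal matrices $\alpha_\pm^\pm$; this is precisely the block system \eqref{LM1}, and the identical computation with source $-1$ in place of $-V$ produces \eqref{LMV}. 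Finally, invertibility of the coefficient matrix --- equivalently, uniqueness of $\vec C$ and $\vec{\widetilde C}$ --- follows because a nonzero element of its kernel would give a nonzero bounded $C^1$ solution $h$ of the homogeneous equation $\mathcal L h=(\Phi+r)h$, which by the same exponential argument tends to $0$ at $\pm\infty$; at an interior extremum of such an $h$ the quantities $\tfrac{a^2}{2}h''+b h'$ and $(\Phi+r)h$ would have opposite strict sign, a contradiction, so $h\equiv0$. The main obstacle is the bookkeeping of this last step: matching the index shifts, signs, and exponent-to-coefficient pairings so that the $2M-2$ scalar conditions genuinely collapse into the $J$-structured block matrices as displayed. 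The only non-routine analytic input is the assertion that $z_1$ and $z_V$ are bona fide $C^1$ solutions of the resolvent ODE, for which the Feynman--Kac formula and interior regularity are needed.
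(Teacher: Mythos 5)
Your proof is correct and follows essentially the same route as the paper's: the Feynman--Kac formula yields the resolvent ODEs for $z_V$ and $z_1$, the constant coefficients on each $(y_{i-1},y_i)$ give two exponential modes plus a constant particular solution, boundedness eliminates the growing modes on the two unbounded intervals, and $C^1$-matching at $y_1,\dots,y_{M-1}$ (the paper cites Karatzas--Shreve for the $C^1$ regularity) produces exactly the $2M-2$ linear conditions that assemble into the displayed block systems. The one substantive addition is your maximum-principle argument for the invertibility of the coefficient matrix, which the paper asserts without proof; that is a welcome supplement rather than a different approach.
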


\begin{proof}
Using the Feynman-Kac formula, 
we have that $ z_V $ and $ z_1 $
satisfy 
\if1
we get two Differential equations 
\begin{eqnarray} 
&\begin{cases}
  \frac{\partial u_V(t,y)}{\partial t}
   = \frac{\frac{1}{2}a^2 \partial^2 u_V(t,y)}{ \partial y^2}
        + b \frac{ \partial u_V(t,y)}{\partial y} -(V(y)+D(y,p))u_V(t,y)   \\
  u(0,y) = V(y) 
 \end{cases}\label{netuV}
\\
&\begin{cases}
  \frac{\partial u_1(t,y)}{\partial t}
   = \frac{\frac{1}{2}a^2 \partial^2 u_1(t,y)}{ \partial y^2}
        + b \frac{ \partial u_1(t,y)}{\partial y} -(V(y)+D(y,p))u_1(t,y)   \\
  u(0,y) = 1
 \end{cases}\label{netu1}
\end{eqnarray}
\fi
\begin{align}
-V(y) + rz_V(y) &= \frac{1}{2} a^2 z_V''(y) +bz_V'(y)-(V(y)+D(y,p))z_V(y) \label{ODE1}\\
-1 + rz_1(y) &= \frac{1}{2} a^2 z_1''(y) +bz_1'(y)-(V(y)+D(y,p))z_1(y) \label{ODE2}
\end{align}
for $ y \in \mathbf{R} \setminus \{y_1, \cdots, y_{M-1}\} $, respectively. 
Since $z_1, z_V \in C^1$ (see e.g. \cite[Theorem 6.4.1]{KS}),
\begin{equation*}
  z_* (y_i+) = z_* (y_i-), \quad z'_* (y_i+) = z'_* (y_i-), \quad(i=1,2,...,M-1)
\end{equation*}
for $ * = V, 1 $. 
Specifically, 
\begin{equation}\label{C1}
    \begin{split}
      &  C_{i,+} e^{\alpha_{i,+} y_i} + C_{i,-} e^{\alpha_{i,-} y_i}
        + \gamma_i \\
     & \hspace{3cm}   =  C_{i+1,+} e^{\alpha_{i+1,+} y_i} + C_{i+1,-} e^{\alpha_{i+1,-} y_i}
        + \gamma_{i+1}
    \end{split}
\end{equation}
and 
\begin{equation}\label{C2}
    \begin{split}
     &   C_{i,+} \alpha_{i,+} e^{\alpha_{i,+} y_i} + C_{i,-} \alpha_{i,-} e^{\alpha_{i,-} y_i} \\
     & \hspace{3cm}   =  C_{i+1,+}\alpha_{i+1,+} e^{\alpha_{i+1,+} y_i} + C_{i+1,-} \alpha_{i+1,-} e^{\alpha_{i+1,-} y_i}
    \end{split}
\end{equation}
for $i=1,2,...,M-1$, 
which is equivalent to 
the equation \eqref{LM1}. 
Similarly the equations derived 
for $ \widetilde{C}_{i,\pm} $, 
which is obtained by replacing $ \gamma_i $
with $ \widetilde{\gamma}_i $,
is equivalent to \eqref{LMV}.
\end{proof}

\subsection{A numerical scheme}
As we remarked already, 
the inversion of the matrix 
\begin{equation*}
\begin{split}
\begin{pmatrix}
       -e^{\alpha_-^+ \mathbf{y}}
       + e^{-\alpha_-^- \mathbf{y}} \, {}^t \! \! J & e^{\alpha_+^- \mathbf{y}}  -   e^{\alpha_+^+ \mathbf{y}} J \\
       (-e^{\alpha_-^+ \mathbf{y}}  +  e^{-\alpha_-^- \mathbf{y}} \,{}^t \! \! J )\alpha^+_{-}
       & (e^{\alpha_+^- \mathbf{y}}  -   e^{\alpha_+^+ \mathbf{y}}J) \alpha^-_+ 
       \end{pmatrix}
    \end{split}
\end{equation*}
might become too heavy if $ M $ is large. 
We then propose a numerical scheme to solve 
the equations \eqref{LM1} and \eqref{LMV} which might work well when 
\begin{equation*}
    \delta := \min_{1 \leq i \leq M-2} (y_{i+1}-y_i) 
\end{equation*}
is very small.

For simplicity, we assume that 
$ y_1, y_2, \cdots, y_{M-1} $ are 
equally spaced:
\begin{equation*}
    \delta = y_{i+1}-y_i, i=1,2, \cdots, M-1, 
\end{equation*}
where we can get an exact expression. First let us observe that 
\begin{equation*}
\begin{split}
&
\begin{pmatrix}
       -e^{\alpha_-^+ \mathbf{y}}
       + e^{-\alpha_-^- \mathbf{y}} \, {}^t \! J & e^{\alpha_+^- \mathbf{y}}  -   e^{\alpha_+^+ \mathbf{y}} J \\
       (-e^{\alpha_-^+ \mathbf{y}}  +  e^{-\alpha_-^- \mathbf{y}} \, {}^t \! J )\alpha^+_{-}
       & (e^{\alpha_+^- \mathbf{y}}  -   e^{\alpha_+^+ \mathbf{y}}J) \alpha^-_+ 
       \end{pmatrix}
       \\
       &=\begin{pmatrix}
       I
       -  {}^t \! J e^{\alpha_-^+\delta } & I  -    J e^{\alpha_+^- \delta} \\
       (I  -  {}^t \! J e^{\alpha_-^+\delta })\alpha^+_{-}
       & I  -    J e^{\alpha_+^- \delta} \alpha^-_+ 
       \end{pmatrix}
       \begin{pmatrix}
         -e^{\alpha_-^+ \mathbf{y}} &  0_{\mathbf{R}^{M-1} \otimes \mathbf{R}^{M-1}} \\
         0_{\mathbf{R}^{M-1} \otimes \mathbf{R}^{M-1}} & e^{\alpha_+^- \mathbf{y}} 
       \end{pmatrix}
        \\
       &  =: L(\delta)  \begin{pmatrix}
         -e^{\alpha_-^+ \mathbf{y}} &  0_{\mathbf{R}^{M-1} \otimes \mathbf{R}^{M-1}} \\
         0_{\mathbf{R}^{M-1} \otimes \mathbf{R}^{M-1}} & e^{\alpha_+^- \mathbf{y}} 
       \end{pmatrix}.
    \end{split}
\end{equation*}
Since we have immediately 
\begin{equation*}
    \begin{pmatrix}
         -e^{\alpha_-^+ \mathbf{y}} &  0_{\mathbf{R}^{M-1} \otimes \mathbf{R}^{M-1}} \\
         0_{\mathbf{R}^{M-1} \otimes \mathbf{R}^{M-1}} & e^{\alpha_+^- \mathbf{y}} 
       \end{pmatrix}^{-1}
       =     \begin{pmatrix}
         -e^{-\alpha_-^+ \mathbf{y}} &  0_{\mathbf{R}^{M-1} \otimes \mathbf{R}^{M-1}} \\
         0_{\mathbf{R}^{M-1} \otimes \mathbf{R}^{M-1}} & e^{-\alpha_+^- \mathbf{y}} 
       \end{pmatrix},
\end{equation*}
we can concentrate on the inversion 
of $ L (\delta) $. 

Noting that 
\begin{equation*}
\begin{split}
L (\delta) & = 
\begin{pmatrix}
        I-{}^t \! J & I-J \\
        (I-{}^t \! J) \alpha^+_{-} &  (I- J)\alpha^-_{+}
        \end{pmatrix} \\
        & \qquad + \sum_{k=1^\infty}
        \frac{\delta^k}{k!}
        \begin{pmatrix}
                {}^t \! J (\alpha_-^+)^k
                & J (\alpha_+^-)^k \\
                {}^t \! J (\alpha_-^+)^{k+1}
                & J (\alpha_k^-)^{k+1}
        \end{pmatrix},
\end{split}
\end{equation*}
we propose 
the following scheme to invert 
the matrix $ L (\delta) $,
where we actually will need the inversion 
of 
\begin{equation*}
    L(0) = \begin{pmatrix}
        I-{}^t \! J & I-J \\
        (I-{}^t \! J) \alpha^+_{-} &  (I- J)\alpha^-_{+}
        \end{pmatrix}.
\end{equation*}
For convenience, 
we denote
\begin{equation*}
    \begin{pmatrix}
                {}^t \! J (\alpha_-^+)^k
                & J (\alpha_+^-)^k \\
                {}^t \! J (\alpha_-^+)^{k+1}
                & J (\alpha_k^-)^{k+1},
        \end{pmatrix}=: L^{(k)} (0). 
\end{equation*}

\begin{thm}\label{Th4.2}
(i) Let $ \mathbf{c} \in \mathbf{R}^{2(M-1)} $ be fixed. 
Define $ \mathbf{x}_k $,
$ k=0, 1, \cdots $ recursively by
\begin{equation*}
    \mathbf{x}_k = 
    \begin{cases}
    L(0)^{-1} \mathbf{c} &  k=0, \\
    - L(0)^{-1} \sum_{j=0}^{k-1}
    \frac{1}{(k-j)!}L^{(k-j)}(0) \mathbf{x}_j
    & k \geq 2.
    \end{cases}
\end{equation*}
Then $ \mathbf{x} := \sum_{k=0}^\infty \delta^k \mathbf{x}_k $
satisfies 
$ L (\delta)\mathbf{x}=\mathbf{c} $.
(ii) In particular, $ \mathbf{x} - \sum_{k=0}^n \mathbf{x}_k = O (\delta^{n+1}) $ for each $ n $. 
(iii) We have explicitly 
\begin{equation*}
    \begin{split}
        L(0)^{-1}
        &= 
        \begin{pmatrix}
   A_1 
        \alpha^-_+ (I-J)^{-1} & A_1 (I-J)^{-1} \\
          A_2 
        \alpha^+_- (I-{}^t \! J)^{-1} 
        & A_2
        (I- {}^t \! J)^{-1} 
        \end{pmatrix} 
    \end{split}
\end{equation*}
where 
\begin{equation*}
    \begin{split}
    & A_1 = \\
      &  \begin{pmatrix}
         (\alpha_{M,-}-\alpha_{1,+})^{-1} (\hat{\alpha}_- - \hat{\alpha}_+)^{-1}(\hat{\alpha}_+ I - \alpha_{M,-} I) 1_{\mathbf{R}^{M-2}} & (\hat{\alpha}_- - \hat{\alpha}_+)^{-1}
        %I_{\mathbf{R}^{M-2}\otimes \mathbf{R}^{M-2}} 
        \\
        (\alpha_{1,+} -\alpha_{M,-})^{-1}
        & {}^t 0_{\mathbf{R}^{M-2}}. 
        \end{pmatrix}
    \end{split}
\end{equation*}
and 
\begin{equation*}
    \begin{split}
    & A_2 = \\
      &   \begin{pmatrix}
        {}^t 0_{\mathbf{R}^{M-2}}
        & (\alpha_{M,-}- \alpha_{1,+})^{-1} \\
        (\hat{\alpha}_+ - \hat{\alpha}_-)^{-1}
        & (\alpha_{M,-}-\alpha_{1,+} )^{-1}  (\hat{\alpha}_+ - \hat{\alpha}_-)^{-1}
    (\hat{\alpha}_{-} I - \alpha_{1,+} I) 1_{\mathbf{R}^{M-2}}
        \end{pmatrix},
    \end{split}
\end{equation*}
with 
\begin{equation*}
    \hat{\alpha}_{\pm} 
    := \mathrm{diag} (\alpha_{2,\pm}, \cdots,  \alpha_{M-1,\pm 1} ) 
    \in \mathbf{R}^{M-2}\otimes\mathbf{R}^{M-2}. 
\end{equation*}
\end{thm}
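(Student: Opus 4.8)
The plan is to establish the three parts in the order (iii), (i), (ii): part (iii) exhibits an explicit inverse of $L(0)$, which both proves that $L(0)$ is invertible and supplies the matrix needed to run the recursion in (i), and then (ii) is an immediate tail estimate.

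For (iii) I would not invert the $2\times 2$ block matrix $L(0)=\begin{pmatrix} I-{}^t\!J & I-J \\ (I-{}^t\!J)\alpha^+_- & (I-J)\alpha^-_+\end{pmatrix}$ by a Schur complement, because $J$ and ${}^t\!J$ are nilpotent shifts (not invertible) and the diagonal blocks $\alpha^{\pm}_{\mp}$ do not commute with them. Instead I would solve $L(0)\,{}^t(\vec{C}_-,\vec{C}_+)=\mathbf{c}$ entrywise: this system is precisely the $\delta$-rescaled form of the matching conditions \eqref{C1}--\eqref{C2}, i.e. a first-order recursion linking $(C_{i,-},C_{i,+})$ to $(C_{i+1,-},C_{i+1,+})$ and carrying the corner data $C_{1,-}=C_{M,+}=0$. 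Since $I-{}^t\!J$ is lower bidiagonal and $I-J$ upper bidiagonal, the variables $C_{\cdot,-}$ come out by forward substitution and the $C_{\cdot,+}$ by backward substitution, so each unknown is expressed in terms of $\mathbf{c}$ up to a single $2\times 2$ coupling at the two endpoints, governed by $\alpha_{1,+}$ and $\alpha_{M,-}$; solving that $2\times 2$ system produces the factors $(\alpha_{M,-}-\alpha_{1,+})^{-1}$ and $(\hat{\alpha}_{\pm}-\hat{\alpha}_{\mp})^{-1}$, and recalling that $(I-J)^{-1}=\sum_{k\ge 0}J^{k}$ and $(I-{}^t\!J)^{-1}=\sum_{k\ge 0}({}^t\!J)^{k}$ are the all-ones triangular matrices puts the result in the stated block form with $A_1,A_2$ as given. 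As a sanity check one verifies $L(0)^{-1}L(0)=I$ block by block using $J^{M-1}=0$. The honest obstacle here is the bookkeeping of these two endpoint corners — the rank defect they create is exactly why the naive block inversion fails, and it is the source of every nontrivial factor in $A_1$ and $A_2$.

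For (i), observe that $L(\delta)=L(0)+\sum_{k\ge 1}\frac{\delta^{k}}{k!}L^{(k)}(0)$ is entire in $\delta$: with $\rho:=\max_i\max_{\pm}|\alpha_{i,\pm}|$ one has $\|{}^t\!J\|=\|J\|=1$ and $\|(\alpha^{\pm}_{\mp})^{m}\|\le\rho^{m}$, hence $\|L^{(k)}(0)\|\le 2\rho^{k+1}$ and the series converges for every $\delta$. By (iii), $\det L(0)\ne 0$, so $\det L(\delta)\ne 0$ near the origin and $\delta\mapsto L(\delta)^{-1}\mathbf{c}$ is analytic there, with a Taylor expansion $\sum_{k\ge 0}\delta^{k}\mathbf{y}_k$. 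Substituting into $L(\delta)\big(\sum_j\delta^{j}\mathbf{y}_j\big)=\mathbf{c}$ and matching coefficients, the $\delta^{0}$ term gives $L(0)\mathbf{y}_0=\mathbf{c}$, and for $n\ge 1$ the $\delta^{n}$ term gives $L(0)\mathbf{y}_n+\sum_{k=1}^{n}\frac{1}{k!}L^{(k)}(0)\mathbf{y}_{n-k}=0$; reindexing $j=n-k$ shows the $\mathbf{y}_n$ obey exactly the recursion defining $\mathbf{x}_n$, whence $\mathbf{x}=\sum_k\delta^{k}\mathbf{x}_k=L(\delta)^{-1}\mathbf{c}$ and $L(\delta)\mathbf{x}=\mathbf{c}$. (One can instead bypass analyticity by bounding $\|\mathbf{x}_k\|$ from the recursion against the majorant series $A(s)$ satisfying $A(s)\big(1-2\|L(0)^{-1}\|\rho(e^{\rho s}-1)\big)=\|\mathbf{x}_0\|$, which converges — and legitimises the term-by-term computation — for $|\delta|<\rho^{-1}\log\!\big(1+(2\|L(0)^{-1}\|\rho)^{-1}\big)$.)

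Part (ii) is then immediate: on the disc of convergence, $\mathbf{x}-\sum_{k=0}^{n}\delta^{k}\mathbf{x}_k=\sum_{k>n}\delta^{k}\mathbf{x}_k$, and since $\|\mathbf{x}_k\|$ grows at most geometrically this tail is $O(\delta^{n+1})$ as $\delta\to 0$.
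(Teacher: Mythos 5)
Your treatment of (i) and (ii) is correct and is essentially the ``standard argument'' that the paper itself only gestures at: the paper merely notes $L(\delta)-L_n=O(\delta^{n+1})$ and $L_n\sum_{k\le n}\delta^k\mathbf{x}_k=\mathbf{c}+O(\delta^{n+1})$ and stops, whereas you supply the convergence of the series for $L(\delta)$, the coefficient matching that identifies the Taylor coefficients of $L(\delta)^{-1}\mathbf{c}$ with the recursion, and a tail bound; this is, if anything, more complete than the paper's version. (You also silently repair the statement's typo in the recursion, which omits the case $k=1$.)

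Part (iii) is where the proposal falls short. First, your reason for rejecting block inversion is mistaken: the blocks of $L(0)$ are built from $I-J$ and $I-{}^t\!J$, which are unipotent (since $J$ is nilpotent) and hence invertible with inverses $\sum_k J^k$ and $\sum_k ({}^t\!J)^k$; the nilpotence of $J$ is no obstruction. The paper's proof is in fact precisely a block elimination: it factors out $I-J$ and $I-{}^t\!J$, sets $K:=(I-J)^{-1}(I-{}^t\!J)$, computes $K$ and $K^{-1}$ in closed form, writes each block of $L(0)^{-1}$ in terms of $(\alpha^-_+K-K\alpha^+_-)^{-1}$ and $(\alpha^+_-K^{-1}-K^{-1}\alpha^-_+)^{-1}$, and then verifies by direct multiplication that these two matrices are exactly $A_1^{-1}$ and $A_2^{-1}$. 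Second, your alternative route (substitution exploiting the bidiagonal structure, closed by the boundary data $C_{1,-}=C_{M,+}=0$) is viable in principle as a transfer-matrix/shooting argument, but not as you describe it: each row of the system couples $C_{i,\pm}$ to $C_{i+1,\pm}$ simultaneously, so the ``$-$'' chain cannot be solved forward independently of the ``$+$'' chain; one must propagate the pair $(C_{i,-},C_{i,+})$ through $2\times2$ interface solves and only then close with the endpoint conditions. More importantly, the entire content of (iii) is the explicit matrices $A_1$ and $A_2$, and you assert rather than derive that your endpoint solve ``produces the factors \dots and puts the result in the stated block form.'' Without actually carrying out that computation (or at least verifying $L(0)^{-1}L(0)=I$ with the stated $A_1,A_2$), part (iii) remains unproved.
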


\begin{proof}
(i), (ii) For each $ n \in \mathbf{N} $ define 
\begin{equation*}
L_n := \sum_{k=0}^n \frac{\delta^k}{k!} L^{(k)} (0). 
\end{equation*}
Then, 
\begin{equation*}
    L -L_n = O (\delta^{n+1}).
\end{equation*}
Since 
\begin{equation*}
    L_n \sum_{k=0}^n \mathbf{x}_k = O (\delta^{n+1}),  
\end{equation*}
by a standard argument we have the assertion (ii), and hence (i). 

(iii) Put
\begin{equation*}
\begin{split}
    K &:= (I-J)^{-1}(I-{}^t \! J)\\
    & = \begin{pmatrix}
    {}^t 0_{\mathbf{R}^{M-2}} & 1 \\
    -I_{\mathbf{R}^{M-2}\otimes \mathbf{R}^{M-2}} & 1_{\mathbf{R}^{M-2}}
    \end{pmatrix}.
\end{split}
\end{equation*}
Then we have that 
\begin{equation*}
    \begin{split}
   &  L(0)^{-1} = 
     \begin{pmatrix}
I-{}^t \! J & I-J \\
        (I-{}^t \! J) \alpha_{-}^+ &  (I- J)\alpha^-_{+}
        \end{pmatrix}^{-1} \\
        % = \begin{pmatrix}
        % (I-J)K & (I-J) \\
        % (I-J) K \alpha_- & (I-J) \alpha_+
        % \end{pmatrix}^{-1}
        % \\
        % &= \begin{pmatrix}
        % K & I \\
        % K \alpha_- & \alpha_+ 
        % \end{pmatrix}^{-1}
        % \begin{pmatrix}
        % (I-J)^{-1} & 0 \\
        % 0 & (I-J)^{-1} 
        % \end{pmatrix} \\
        &= 
        \begin{pmatrix}
        (\alpha^-_+ K - K \alpha^+_-)^{-1}
        \alpha^-_+ (I-J)^{-1} & (K \alpha^+_- - \alpha^-_+ K)^{-1}(I-J)^{-1} \\
          (\alpha^+_- K^{-1} - K^{-1} \alpha^-_+)^{-1}
        \alpha^+_- (I-{}^t \! J)^{-1} 
        & (K^{-1} \alpha^-_+ -\alpha^+_- K^{-1})^{-1}
        (I- {}^t \! J)^{-1} 
        \end{pmatrix}.
    \end{split}
\end{equation*}
% Thus we get
% \begin{equation*}
%     \begin{split}
%         \vec{C} = 
%         \begin{pmatrix}
%         -e^{-\alpha_- \mathbf{y}_-}
%         (\alpha_+ K - K \alpha_-)^{-1}
%         \alpha_+ (I-J)^{-1} \vec{\gamma} \\
%         e^{-\alpha_+ \mathbf{y}_+}
%         (\alpha_- K^{-1} - K^{-1} \alpha_+)^{-1}
%         \alpha_- (I-{}^t \! J)^{-1} 
%         \vec{\gamma}
%         \end{pmatrix}.
%     \end{split}
% \end{equation*}
We then see that 
\begin{equation*}
\begin{split}
  & \alpha_+ K - K \alpha_-
 = \begin{pmatrix}
 \alpha_{1, +} & {}^t 0_{\mathbf{R}^{M-2}} \\
0_{\mathbf{R}^{M-2}} & \hat{\alpha}_+
I_{\mathbf{R}^{M-2}\otimes \mathbf{R}^{M-2}} 
\end{pmatrix}\begin{pmatrix}{}^t 0_{\mathbf{R}^{M-2}} & 1 \\ -I_{\mathbf{R}^{M-2}\otimes \mathbf{R}^{M-2}} & 1_{\mathbf{R}^{M-2}}
\end{pmatrix}\\
& \qquad - 
\begin{pmatrix}{}^t 0_{\mathbf{R}^{M-2}} & 1 \\
-I_{\mathbf{R}^{M-2}\otimes \mathbf{R}^{M-2}} & 1_{\mathbf{R}^{M-2}}
\end{pmatrix}
\begin{pmatrix}
\hat{\alpha}_- I_{\mathbf{R}^{M-2}\otimes \mathbf{R}^{M-2}} & 0_{\mathbf{R}^{M-2}} \\
{}^t 0_{\mathbf{R}^{M-2}}
& \alpha_{M,-}
\end{pmatrix} \\
& \qquad \qquad = \begin{pmatrix}{}^t 0_{\mathbf{R}^{M-2}} & 
\alpha_{1,+} -\alpha_{M,-} \\
\hat{\alpha}_- - \hat{\alpha}_+
%I_{\mathbf{R}^{M-2}\otimes \mathbf{R}^{M-2}} 
& (\hat{\alpha}_+ I - \alpha_{M,-} I) 1_{\mathbf{R}^{M-2}} 
\end{pmatrix}
= A_1^{-1}
\end{split}
\end{equation*}
and 
% \begin{equation*}
%     \begin{split}
%         (\alpha^-_+ K - K \alpha^+_-)^{-1}
%         = \begin{pmatrix}
%         b & (\hat{\alpha}_- - \hat{\alpha}_+)^{-1}
%         %I_{\mathbf{R}^{M-2}\otimes \mathbf{R}^{M-2}} 
%         \\
%         (\alpha_{1,+} -\alpha_{M,-})^{-1}
%         & {}^t 0_{\mathbf{R}^{M-2}}. 
%         \end{pmatrix}
%     \end{split}
% \end{equation*}
% where 
% \begin{equation*}
%     b:= - (\alpha_{1,+} -\alpha_{M,-})^{-1} (\hat{\alpha}_- - \hat{\alpha}_+)^{-1}(\hat{\alpha}_+ I - \alpha_{M,-} I) 1_{\mathbf{R}^{M-2}},
% \end{equation*}
% or $ b = {}^t \!(b_2, \cdots, b_{M-1}) $ with
% \begin{equation*}
%     b_i = - \frac{\alpha_{i,+}-\alpha_{M,-}}{(\alpha_{1,+}-\alpha_{M,-})(\alpha_{i,-} - \alpha_{i,+})}, 
% \end{equation*}
% for $ i=2,\cdots, M-1$. 
\begin{equation*}
    \begin{split}
      &  \alpha^+_- K^{-1} - K^{-1} \alpha^-_+ \\
      &  = \begin{pmatrix}
        (\hat{\alpha}_{-} I - \alpha_{1,+} I) 1_{\mathbf{R}^{M-2}} & 
        \hat{\alpha}_+ - \hat{\alpha}_-
        %I_{\mathbf{R}^{M-2}\otimes \mathbf{R}^{M-2}} 
        \\
        \alpha_{M,-}- \alpha_{1,+}
        & {}^t 0_{\mathbf{R}^{M-2}}
        \end{pmatrix} = A_2^{-1}.
    \end{split}
\end{equation*}
% where 
% \begin{equation*}
%     \hat{b} := - (\alpha_{1,+} -\alpha_{M,-})^{-1}  (\hat{\alpha}_+ - \hat{\alpha}_-)^{-1}
%     (\hat{\alpha}_{-} I - \alpha_{1,+} I) 1_{\mathbf{R}^{M-2}},
% \end{equation*}
% or $ \hat{b} = \, {}^t \!( \hat{b}_2, \cdots, \hat{b}_{M-1}) $ with
% \begin{equation*}
%     \hat{b}_i = - \frac{\alpha_{i,-}-\alpha_{1,+}}{(\alpha_{1,+}-\alpha_{M,-})(\alpha_{i,-} - \alpha_{i,+})}, 
% \end{equation*}
% for $ i=2,\cdots, M-1$. 
\end{proof}

%%%%%%%%%%%%%%%%%%%%%%%%%%%%%%%
\section{A Model by the 2-Dimensional Squared Bessel Process}\label{Bessel2}

In this section, 
we specifically assume that the personal condition process $ X_t $ is the 2-dimensional squared Bessel Process
\begin{align}
dX_t= 2 \sqrt{X_t}dW_t + 2 dt \quad(a> 0). \label{squared Bessel}
\end{align}
We further assume that the initial condition distribution is approximated by 
the exponential distribution whose mean is $ \frac{1}{\gamma} $; the limit density
$ f $ in Assumption \ref{a1}
is given by 
\begin{equation*}
f(x) = \gamma e^{-\gamma x}
   \quad ( \gamma > 0) .
\end{equation*}
Moreover, we assume that 
the killing rate functions are
$ V(y) $ and $D(y,p)$ 
\begin{equation*}
\begin{split}
 & V(x) = mx+n, \\
 & D(x,p) = \varphi(p)x+\varrho (p) , \\
 & (m>0,\varphi(p)<0,n,\varrho(p) \in \mathbf{R}
              ).
\end{split}
\end{equation*}
We call this the 2-dimensional Squared Bessel model, 2SB model for short. 
% We assume that
% \begin{align}
% r+n+\varrho(p)=2\sqrt{\frac{a^2(m+\varphi(p))}{2}} \quad and\quad m+\varphi(p)>0. \label{ass1}
% \end{align}
The 2SB model, by nature, satisfies 
Assumptions \ref{a0}, \ref{a1}, and 
\ref{a4}. 

\begin{thm}
The virtual average expected return in the 2SB model is explicitly calculated as:
\begin{equation*}
    \begin{split}
     & \mathrm{VAR}_s (p)
     = \frac{Am}{\lambda} \\
    & + \left( \gamma p + \frac{Amc}{\lambda} -\gamma A n\right)  \frac{1}{\gamma \sqrt{2\lambda }+2\lambda} \\
    & \times 
    \bigg(F\left(1, \frac{1+c}{2\sqrt{2\lambda}}-2,\frac{1+c}{2\sqrt{2\lambda}}-1; -\frac{\gamma-\sqrt{2\lambda }}{\gamma+ \sqrt{2\lambda }}\right) - \frac{1}{c+\sqrt{2\lambda}}\bigg),
    \end{split} 
\end{equation*}
where $ F $ is the hypergeometric function, 
\begin{equation*}
    c:= r +n+ \varrho (p),
\end{equation*}
and 
\begin{equation*}
    \lambda := m + \varphi (p).
\end{equation*}
\end{thm}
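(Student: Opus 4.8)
The plan is to collapse $\mathrm{VAR}_s(p)$ to a single integral in the time variable and then identify that integral as a Gauss hypergeometric function through Euler's integral representation. I would work with the first expression in \eqref{core}, the average over the initial profile $x$ against the density $f$. Since $V$ and $D$ are affine in $x$, we have $V(x)+D(x,p)=\lambda x+(n+\varrho(p))$ with $\lambda=m+\varphi(p)$, so that $e^{-rt}e^{-\int_0^t (V+D)(X_s)\,ds}=e^{-\int_0^t(\lambda X_s+c)\,ds}$ with $c=r+n+\varrho(p)$; and $p-AV(X_t)=(p-An)-AmX_t$. Hence
\[
\mathrm{VAR}_s(p)=(p-An)\int_0^\infty\Phi(t)\,dt-Am\int_0^\infty\Psi(t)\,dt,
\]
where $\Phi(t)=\int_0^\infty f(x)\,\mathbf{E}^x[e^{-\int_0^t(\lambda X_s+c)\,ds}]\,dx$ and $\Psi(t)=\int_0^\infty f(x)\,\mathbf{E}^x[X_t\,e^{-\int_0^t(\lambda X_s+c)\,ds}]\,dx$. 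An equivalent route is to write $AV(X_t)=\tfrac{Am}{\lambda}(\lambda X_t+c)+A(n-\tfrac{mc}{\lambda})$ and use $(\lambda X_t+c)\,e^{-\int_0^t(\lambda X_s+c)\,ds}=-\partial_t e^{-\int_0^t(\lambda X_s+c)\,ds}$; integrating the $\partial_t$-term over $t\in(0,\infty)$ and over the paths produces the constant $\tfrac{Am}{\lambda}$, using that $\mathrm{BESQ}^{2}$ is neighbourhood recurrent so $\int_0^\infty(\lambda X_s+c)\,ds=\infty$ a.s.\ when $\lambda>0$.

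For the Bessel piece I would invoke the classical Cameron--Martin formula for $\mathrm{BESQ}^{2}$. By the Feynman--Kac formula the function $v(t,x;\mu):=\mathbf{E}^x[e^{-\mu X_t-\lambda\int_0^t X_s\,ds}]$ solves $\partial_t v=2xv_{xx}+2v_x-\lambda x v$ with $v(0,\cdot)=e^{-\mu\,\cdot}$; the exponential--affine ansatz $v=A(t;\mu)e^{-B(t;\mu)x}$ reduces this to the Riccati equation $B'=\lambda-2B^2$, $B(0)=\mu$, together with $A'/A=-2B$, $A(0)=1$, both solvable explicitly via $\tanh$/$\coth$. In particular $B(t;0)=\tfrac{\sqrt{2\lambda}}{2}\tanh(\sqrt{2\lambda}\,t)$ and $A(t;0)=1/\cosh(\sqrt{2\lambda}\,t)$, and the extra factor $X_t$ in $\Psi$ is recovered as $-\partial_\mu v|_{\mu=0}$. (One also checks $\mathcal{L}^*=\mathcal{L}$ for $\mathrm{BESQ}^2$, i.e.\ its speed measure is Lebesgue, so that the second, adjoint expression in \eqref{core} coincides with the first.)

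Integrating against $f(x)=\gamma e^{-\gamma x}$ is then elementary, $\int_0^\infty\gamma e^{-(\gamma+B(t))x}\,dx=\gamma/(\gamma+B(t))$, giving
\[
\Phi(t)=\frac{\gamma\,e^{-ct}}{\gamma\cosh(\sqrt{2\lambda}\,t)+\tfrac{\sqrt{2\lambda}}{2}\sinh(\sqrt{2\lambda}\,t)}=:\frac{\gamma\,e^{-ct}}{G(t)},
\]
and $\Psi$ the corresponding expression from $-\partial_\mu$. A short computation shows $\Psi(t)=-\tfrac{\gamma}{\lambda}e^{-ct}\tfrac{d}{dt}G(t)^{-1}$, so an integration by parts in $t$ (the boundary term at $\infty$ vanishing as before, the one at $0$ producing $-1/\gamma$) expresses $\int_0^\infty\Psi$ as a constant plus a multiple of $\int_0^\infty\Phi$. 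Everything is thus reduced to the single integral $\int_0^\infty\Phi(t)\,dt$.

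Finally I would evaluate $\int_0^\infty\Phi(t)\,dt$ by writing $\cosh,\sinh$ in exponentials, $G(t)=\tfrac14\big((2\gamma+\sqrt{2\lambda})e^{\sqrt{2\lambda}\,t}+(2\gamma-\sqrt{2\lambda})e^{-\sqrt{2\lambda}\,t}\big)$, factoring out $e^{\sqrt{2\lambda}\,t}$ and substituting $u=e^{-2\sqrt{2\lambda}\,t}$. This converts the integral into $\mathrm{const}\cdot\int_0^1\frac{u^{b-1}}{1+\rho u}\,du$ with $\rho=\frac{2\gamma-\sqrt{2\lambda}}{2\gamma+\sqrt{2\lambda}}\in(-1,1)$ and $b=\frac{c+\sqrt{2\lambda}}{2\sqrt{2\lambda}}$; by Euler's integral representation this equals $\frac1b\,{}_2F_1(1,b;b+1;-\rho)=\sum_{n\ge0}\frac{(-\rho)^n}{b+n}$, and peeling off the $n=0$ term, $\frac1b=\frac{2\sqrt{2\lambda}}{c+\sqrt{2\lambda}}$, is what accounts for the $-\frac{1}{c+\sqrt{2\lambda}}$-type correction in the statement. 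Collecting the constants then yields the claimed closed form. I expect the main obstacle to be precisely this last bookkeeping: tracking the factors of $2$ (from the Bessel normalisation $B\sim\tfrac12\sqrt{2\lambda}\tanh$ and from the substitution $u=e^{-2\sqrt{2\lambda}t}$), the shift $b\mapsto b+1$ in the hypergeometric parameters, and the contributions coming from $\int_0^\infty\Psi$, so that the final coefficients match the stated expression; convergence moreover requires $\lambda>0$ and $c+\sqrt{2\lambda}>0$, which should be read as standing hypotheses.
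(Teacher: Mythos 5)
Your route is essentially the paper's: the same affine decomposition $V+D=\lambda x+(n+\varrho(p))$, the same reduction of the $X_t$-weighted term via $(\lambda X_t+c)\,e^{-\int_0^t(\lambda X_s+c)ds}=-\partial_t e^{-\int_0^t(\lambda X_s+c)ds}$ followed by integration by parts in $t$, the same Cameron--Martin formula for $\mathrm{BESQ}^2$ (which you derive via a Riccati equation rather than citing Ikeda--Watanabe), the same Gaussian-free integration over $x$ against $\gamma e^{-\gamma x}$, and the same final Laplace-type time integral, which you evaluate by Euler's integral representation of ${}_2F_1$ where the paper expands a geometric series and identifies Pochhammer symbols --- these are the same computation. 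Two minor points in your favour: your normalisation $B(t;0)=\tfrac{\sqrt{2\lambda}}{2}\tanh(\sqrt{2\lambda}\,t)$ is the standard Cameron--Martin exponent, whereas the paper's displayed formula omits the factor $\tfrac12$ (which matters for the final bookkeeping you rightly flag), and your explicit standing hypotheses $\lambda>0$ and $c+\sqrt{2\lambda}>0$ are needed for convergence but are left implicit in the paper despite the assumption $\varphi(p)<0$.
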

\begin{proof}
First note that 
\begin{equation*}
    \begin{split}
     & \mathrm{VAR}_s (p)
     = \int_0^\infty e^{-r t} dt 
     \int_0^\infty \gamma e^{-\gamma x} dx \\
     & \times E [(p -A (m X_t+n) )e^{- (m+\varphi(p)) \int_0^t X_s \,ds}] |X_0 = x] e^{-(n + \varrho(p) ) t}.
    \end{split} 
\end{equation*}
Now we see that 
we need to calculate 
\begin{equation*}
    \begin{split}
    \gamma \int_0^\infty e^{-c t} 
     \int_0^\infty e^{-\gamma x} E [(aX_t+b) e^{- \lambda \int_0^t X_s \,ds} |X_0 = x] \,dx \,dt
    \end{split}
\end{equation*}
for 
\begin{equation*}
    \lambda = m+\varphi(p),
\end{equation*}
\begin{equation*}
    c = r + n + \varphi(p),
\end{equation*}
\begin{equation*}
    a= -Am,
\end{equation*}
and 
\begin{equation*}
    b= p-An,
\end{equation*}
which can be further reduced to 
the calculation of 
\begin{equation*}
    \begin{split}
  I_\gamma (c,\lambda):= \int_0^\infty e^{-c t} 
     \int_0^\infty e^{-\gamma x} E [e^{- \lambda \int_0^t X_s \,ds} |X_0 = x] \,dx \,dt
    \end{split}
\end{equation*}
since 
\begin{equation*}
    \begin{split}
  & \gamma \int_0^\infty e^{-c t} 
     \int_0^\infty e^{-\gamma x} E [X_t e^{- \lambda \int_0^t X_s ,ds} |X_0 = x] \,dx \,dt \\
    & = -\frac{\gamma}{\lambda} \int_0^\infty e^{-c t} 
     \int_0^\infty e^{-\gamma x} 
     \partial_t E [e^{- \lambda \int_0^t X_s \,ds} |X_0 = x] \,dx \,dt \\
     &= -\frac{1}{\lambda}
     -\frac{c}{\lambda}\int_0^\infty e^{-c t} 
     \int_0^\infty e^{-\gamma x} 
     E [e^{- \lambda \int_0^t X_s \,ds} |X_0 = x] \,dx \,dt. 
    \end{split}
\end{equation*}
That is, 
\begin{equation*}
    \begin{split}
          & \mathrm{VAR}_s (p)
     = \frac{Am}{m+ \varphi (p)} \\
    & + \left( \gamma p + \frac{Am(r +n+ \varrho (p))}{m + \varphi (p)} -\gamma A n\right) I_\gamma \left( r +n+ \varrho (p), m + \varphi (p)\right). 
    \end{split}
\end{equation*}

It is well-known that 
(see e.g. Ikeda-Watanabe \cite{IW})
that 
\begin{equation*}
\begin{split}
    E [e^{- \lambda \int_0^t X_s \,ds} |X_0 = x] = 
    \frac{e^{-x \sqrt{2\lambda}\tanh \sqrt{2\lambda} t} }{\cosh\sqrt{2\lambda} t},
\end{split}
\end{equation*}
therefore we have 
\begin{equation*}
    \begin{split}
        I &= \int_0^\infty  \frac{e^{-ct}}{(\gamma+ \sqrt{2\lambda}\tanh \sqrt{2\lambda}t)\cosh\sqrt{2\lambda}t}\,dt \\
        &= \int_0^\infty  \frac{e^{-ct}}{(\gamma \cosh \sqrt{2\lambda}t + \sqrt{2\lambda}\sinh \sqrt{2\lambda}t)}\,dt \\
        &= \int_0^\infty  \frac{2}{\gamma+ \sqrt{2\lambda }} \frac{e^{-(c+\sqrt{2\lambda })t}}{1+ \frac{\gamma-\sqrt{2\lambda }}{\gamma+ \sqrt{2\lambda }}e^{-2\sqrt{2\lambda}t}}\,dt \\
        & =  \int_0^\infty  \frac{2e^{-(c+\sqrt{2\lambda })t}}{\gamma+ \sqrt{2\lambda }} \sum_{j=1}^\infty \ \left(-\frac{\gamma-\sqrt{2\lambda }}{\gamma+ \sqrt{2\lambda }}e^{-2t}\right)^j\,dt \\
     & =  \frac{2}{\gamma+ \sqrt{2\lambda }} \sum_{j=1}^\infty\left(-\frac{\gamma-\sqrt{2\lambda }}{\gamma+ \sqrt{2\lambda }}\right)^j \int_0^\infty e^{-(c+ (2j+1) \sqrt{2\lambda})t} \,dt \\
     & =\frac{2}{\gamma+ \sqrt{2\lambda }} \sum_{j=1}^\infty\left(-\frac{\gamma-\sqrt{2\lambda }}{\gamma+ \sqrt{2\lambda }}\right)^j \frac{1}{c+ (2j+1) \sqrt{2\lambda}}\\
     &= \frac{1}{\gamma \sqrt{2\lambda }+ 2\lambda } \sum_{j=1}^\infty\left(-\frac{\gamma-\sqrt{2\lambda }}{\gamma+ \sqrt{2\lambda }}\right)^j 
     \frac{\left(\frac{1+c}{2\sqrt{2\lambda}}-2 \right)_j}{\left(\frac{1+c}{2\sqrt{2\lambda}}-1 \right)_j} \\
     &=  \frac{1}{\gamma \sqrt{2\lambda }+ 2\lambda } \times \\
     & \bigg(F\left(1, \frac{1+c}{2\sqrt{2\lambda}}-2,\frac{1+c}{2\sqrt{2\lambda}}-1; -\frac{\gamma-\sqrt{2\lambda }}{\gamma+ \sqrt{2\lambda }}\right) - \frac{1}{c+\sqrt{2\lambda}}\bigg). 
    \end{split}
\end{equation*}

Note that here $ (\cdot)_n$ is the Pochhammer symbol,
that is, 
\begin{equation*}
    (x)_n = \prod_{k=0}^{n-1} (x+ k) 
\end{equation*}
for a complex number $ x $. 
\end{proof}

\section{Concluding Remark}

The present paper proposed a totally new framework to evaluate the heterogeneous risks in whole-life insurance. 
We have employed 
a large-agent limit which is analogous to the thermodynamic one, 
and both the life-time and the surrender-time 
are modelled by the killing time of the diffusion process, 
while the cash-flow is evaluated 
by the Laplace transform. 
The two specific models have shown 
the potential of our framework. 

Even though we have worked only on the determination of the level-premium, 
the proposed framework can be used for more general cases, 
including forward-looking models
where a mean-field type approximation can work.

\appendix \section{Appendix} 
 \label{continuous death}

\begin{lem}\label{appendlem}
We have that 
\begin{equation}\label{appendlemf}
\mathbb{E}[e^{-r\zeta^i}1_{\{\zeta^i\leq t\}}]
=\int_0^t   e^{-rs}\mathbb{E}[V(X^i_s)e^{-\int^s_0V(X^i_u)du}]ds.
\end{equation}
\end{lem}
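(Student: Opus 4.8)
The plan is to condition on the entire trajectory $\sigma(X_s : s \le t)$ and read off the conditional law of $\zeta$ from Assumption \ref{a3}. Fix a path and abbreviate $S_s := e^{-\int_0^s V(X_u)\,du}$ and $G(s) := 1 - S_s$. By \eqref{a1f}, $G(s) = \mathbb{P}(\zeta \le s \mid \sigma(X_u : u \le t))$ for every $s \in [0,t]$, so on $[0,t]$ the function $G$ is the conditional (sub-)distribution function of $\zeta$; the missing mass escapes to $\zeta = +\infty$ but plays no role here since we integrate only against $1_{\{\zeta \le t\}}$. Because $s \mapsto S_s$ is $C^1$ along almost every path with $S_s' = -V(X_s)S_s$, the function $G$ is absolutely continuous on $[0,t]$ with density $G'(s) = V(X_s)S_s$.

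First I would write the conditional expectation as a Lebesgue--Stieltjes integral and use this absolute continuity:
\begin{equation*}
\mathbb{E}\left[e^{-r\zeta}1_{\{\zeta \le t\}} \,\big|\, \sigma(X_u : u \le t)\right] = \int_{[0,t]} e^{-rs}\,dG(s) = \int_0^t e^{-rs}\,V(X_s)\,e^{-\int_0^s V(X_u)\,du}\,ds.
\end{equation*}
Then I would take unconditional expectations of both sides and interchange $\mathbb{E}$ with $\int_0^t\cdot\,ds$ by Tonelli's theorem, which is legitimate since $V \ge 0$ makes the integrand nonnegative; replacing $X$ by the $i$-th copy $X^i$ gives exactly \eqref{appendlemf}.

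The only point needing a little care is the identification of the conditional law of $\zeta$ on $[0,t]$ from its conditional survival function, i.e.\ that the conditional hazard rate of $\zeta$ is $V(X_s)$ --- but this is precisely what \eqref{a1f} encodes and is used tacitly elsewhere in the paper (e.g.\ in the derivations of \eqref{expenditure1} and \eqref{revenue3}), so I do not anticipate a genuine obstacle. If one prefers to avoid invoking absolute continuity of $G$ directly, one can instead integrate by parts, writing $\int_{[0,t]} e^{-rs}\,dG(s) = e^{-rt}G(t) + r\int_0^t e^{-rs}G(s)\,ds = 1 - e^{-rt}S_t - r\int_0^t e^{-rs}S_s\,ds$, and then recognizing the right-hand side as $\int_0^t e^{-rs}V(X_s)S_s\,ds$ via a second integration by parts using $dS_s = -V(X_s)S_s\,ds$; both computations agree and reproduce \eqref{appendlemf}.
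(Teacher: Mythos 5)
Your proof is correct and follows essentially the same route as the paper's: identify the (sub)distribution of $\zeta^i$ on $[0,t]$ through its survival function from Assumption \ref{a3}, differentiate to obtain the density $\mathbb{E}[V(X^i_s)e^{-\int_0^s V(X^i_u)\,du}]$, and evaluate the Stieltjes integral of $e^{-rs}$ against it. The only difference is the order of operations: the paper takes the unconditional expectation of \eqref{a1f} first and differentiates $G(s)=\mathbb{E}[e^{-\int_0^s V(X_u)\,du}]$ directly, which sidesteps the (harmless but, strictly speaking, unstated) upgrade of the conditioning $\sigma$-algebra from $\sigma(X_u:u\le s)$ to $\sigma(X_u:u\le t)$ that your pathwise version requires.
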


\begin{proof}
Let $G(u)=\mathbb{P}(\zeta^i > u)$ for $ u \geq 0 $.
Then, by taking the expectation of both sides of \eqref{a1f},
\begin{equation*}
    G(s) = \mathbb{E} [e^{-\int_0^s V(X_u) \,du }], \quad s \geq 0. 
\end{equation*}
Since $ e^{-\int_0^s V(X_u) \,du } $
is differentiable in $ s $ almost surely
and uniformly bounded by $ 1 $, 
we see that  $ G $ is also differentiable and
\begin{equation*}
    G'(s) = - \mathbb{E}[V(X^i_s)e^{-\int^s_0V(X^i_u)du}], \quad s > 0.
\end{equation*}
Then, since
\begin{align*}
\mathbb{E}\left [e^{-r\zeta^i}1_{\{\zeta^i \leq t\}} \right]&= \int_0^t  e^{-rs}\mathbb{P}(\zeta^i \in   ds)\\
&= - \int_0^t  e^{-rs} d G_s = - \int_0^t  e^{-rs} G' (s) \,ds, 
\end{align*}
we get \eqref{appendlemf}. 
\end{proof}

\begin{thebibliography}{}
%\bibitem{Nishida}　西田真穂 「終身型保険の価格設定について」立命館大学修士論文，2005年3月

\bibitem{Adams}
Adams, C. J., Donnelly, C. 
and Macdonald, 
A. S. 
The impact of known breast cancer
polygenes on critical illness insurance. 
{\em Scandinavian Actuarial Journal}, Volume 2015, Issue 2
141--171 , 2015.

\bibitem{Akerlof}
Akerlof, G. A.  
The market for ``lemons": quality uncertainty and the market mechanism. {\it The Quarterly Journal of Economics}, pages 488--500, 1970

\bibitem{Alb-Geman}
Albizzati, M-O. and Geman, H. 
Interest Rate Risk Management and Valuation of the Surrender Option in Life Insurance Policies
{\it The Journal of Risk and Insurance}
Vol. 61, No. 4 (1994), pp. 616-637.

\bibitem{Ballotta}
Ballotta, L. Eberlein, E. 
Schmidt, T. and Zeineddine, R.
Variable annuities in a L\'evy-based hybrid model with surrender risk, 
{\it Quantitative Finance}, 20:5,
(2020),  867-886,

\bibitem{Bl}
Bluhm, W. F. Cumulative antiselection
theory, {\it Transactions of Society of actuaries},34  (1982).

\bibitem{CP}
Cawley, J. and Philipson, T. 
An Empirical Examination of Barriers to Trade in Insurance" 
{\it American Economic Review},
89 (4) (1999), 827-846. 

\bibitem{IW}
Ikeda, N. and Watanabe, S. 
{\em Stochastic Differential Equations and Diffusion Processes}, 2nd edtion, 
Kodansha-North Holland, 1989. 

\bibitem{Jones1998}
Jones, B. L. A Model for Analyzing the
Impact of Selective Lapsation on Mortality, {\it North American Actuarial Journal}, 2:1 (1998),  79--86.

\bibitem{KS} 
Karatzas, I. and Shreve, S.
{\em  Brownian Motion and Stochastic Calculus}, 2nd edition. Springer, 1991. 

\bibitem{LeCNak}
Le Courtois, O. and Nakagawa, H., On surrender and default risks. Math. Finance
23 (1) (2013), 143--168. 

\bibitem{LM}
Loisel, S. and  Milhaud, X.
From deterministic to stochastic surrender risk models: Impact of correlation crises on economic capital, {\it European Journal of Operational Research}, Volume 214, Issue 2, 16 (2011), Pages 348-357. 

\bibitem{Meza-Webb}
de Meza D, and Webb, D.C. 
Advantageous Selection in Insurance Markets
{\em The RAND Journal of Economics}
Vol. 32, No. 2 (2001), pp. 249--262.

\bibitem{MLM2011}
Milhaud, X., S. Loisel, and V. Maume-Deschamps, Surrender Triggers in Life Insurance: Classification and Risk Predictions, Bulletin Franc\c{c}ais d'Actuariat, 11(22), (2011):5-48, 

\bibitem{adverse1}
Gatzert, N. Hoermann, G. and Schmeiser, H. 
The Impact of the Secondary Market on Life Insurers' Surrender Profits
{\it The Journal of Risk and Insurance}
Vol. 76, No. 4 (2009), pp. 887-908.

\bibitem{RW}
Rogers, L.C.G. and Williams, D. 
{\em Diffusions, Markov Processes and Martingales Volume 1:Foundations}  
{\it Cambridge} 2000. 

\bibitem{Rot-Stig}
Rothschild, M. and Stiglitz, J. Equilibrium in competitive insurance markets: An essay
on the economics of imperfect information. {\it The Quarterly Journal of Economics},
Vo. 90, No.4 (1976) pages 629--649.


% \bibitem{Russel2013}
% Russell, D.T.
% and Fier, S.G. Carson, J.M. and Dumm, R.E. 
% An Empirical Analysis of Life Insurance Policy Surrender Activity, 
% {\it Journal of Insurance Issues}, 
% Vol. 36, No. 1 (2013), pp. 35-57


\bibitem{VMS}
Vaupel, J.W., Manton, K.G. and Stallard, E. The impact of heterogeneity in individual frailty on the dynamics of mortality. {\em Demography} 16,  (1979) 439–454. 

\bibitem{term}
Viswanathan, K.S. 
Lemaire, J. 
Withers, K. 
Armstrong, K. 
Baumritter, A. 
Hershey, J.C. 
Pauly, M.V. and Asch, D.A. 
Adverse Selection in Term Life Insurance Purchasing due to the BRCA 1/2 Genetic Test and Elastic Demand
The Journal of Risk and Insurance, Vol. 74, No. 1, (2007), 65-86.

\bibitem{WM}
Woodbury, M.A., and Manton, K.G. 
A Random-Walk
Model of Human Mortality and Aging, {\em  Theoretical Population Biology} 11 (1977), 37--48.

\bibitem{YMV}
Yashin, A.I., Manton, K.G., and Vaupel, J.W. Mortality
and Aging in a Heterogeneous Population: A Stochastic
Process Model with Observed and Unobserved Variables,
{\em Theoretical Population Biology} 27, (1985) 154–75.

\end{thebibliography}
\end{document}